\newtheorem{theorem}{Theorem}[section]
\newtheorem{lemma}{Lemma}[section]
\newtheorem{corollary}{Corollary}[section]
\newtheorem{claim}{Claim}[section]
\newtheorem{definition}{Definition}[section]
\newcommand{\qed}{\hfill $\Box$ \bigbreak}
\newenvironment{proof}{\noindent {\bf Proof.}}{\qed}
\newcommand{\remove}[1]{}
\newcommand{\algname}{\ensuremath{{\cal A}}}
\newcommand{\diam}[1]{\ensuremath{{diam}(#1)}}
\newcommand{\chop}[1]{\ensuremath{\mathrm{chop}(#1)}}
\newcommand{\selected}[1]{\ensuremath{\mathrm{max}_{\cal A}(#1)}}
\newcommand{\advice}[1]{\ensuremath{\mathrm{adv}_{\algname}(#1)}}
\newcommand{\glue}[2]{\ensuremath{\mathrm{glue}(#1,#2)}}
\newcommand{\candidates}[1]{\ensuremath{C_{#1}}}
\newcommand{\maxcandidates}{\ensuremath{\beta}}
\newcommand{\maxcandidatesval}{\ensuremath{\lceil8/\alpha\rceil}}
\newcommand{\givensize}{\ensuremath{\tilde{n}}}
\newcommand{\givendiameter}{\ensuremath{D}}
\begin{document}

\baselineskip  0.18in 
\parskip     0.0in 
\parindent   0.2in 

\title{{\bf Election vs. Selection:\\ Two Ways of Finding the Largest Node in a Graph}}
\date{}
\newcommand{\inst}[1]{$^{#1}$}

\author{
Avery Miller\inst{1},
Andrzej Pelc\inst{1}$^,$\footnote{Partially supported by NSERC discovery grant and by the Research Chair in Distributed Computing at the Universit\'e du Qu\'{e}bec en Outaouais.}\\
\inst{1} Universit\'{e} du Qu\'{e}bec en Outaouais, Gatineau, Canada.\\
E-mails: \url{avery@averymiller.ca}, \url{pelc@uqo.ca}\\
}

\date{ }
\maketitle

\begin{abstract}

Finding the node with the largest label in a labeled network, modeled as an undirected connected graph, is one of the fundamental problems in distributed computing.
This is the way in which {\em leader election} is usually solved. We consider two distinct tasks in which the largest-labeled node is found deterministically.
In {\em selection}, this node has to output 1 and all other nodes have to output 0. In {\em election}, the other nodes must additionally learn the largest
label (everybody has to know who is the elected leader). Our aim is to compare the difficulty of these two seemingly similar tasks executed under stringent running time constraints. The measure of difficulty is the amount of information
 that nodes of the network must initially possess, in order to solve the given task in an imposed amount of time. Following the standard framework of {\em algorithms
with advice}, this information (a single binary string) is provided to all nodes at the start by an oracle knowing the entire graph. The length of this string is called the {\em size of advice}. The paradigm of algorithms with advice has a far-reaching importance in the realm of network algorithms. Lower bounds on the size of advice
give us impossibility results based strictly on the \emph{amount} of initial knowledge outlined in a model's description.
This more general approach should be contrasted with
traditional results that focus on specific \emph{kinds} of information available to nodes, such as the size, diameter, or maximum node degree. 

Consider the class of $n$-node graphs with any diameter $diam \leq D$, for some integer $D$.
If time is larger than $diam$, then both tasks can be solved without advice.
For the task of {\em election}, we show that if time is smaller than $diam$, then the optimal size of advice is $\Theta(\log n)$,
and if time is exactly $diam$, then the optimal size of advice is $\Theta(\log D)$.  For the task of {\em selection}, the situation changes dramatically,
even within the class of rings. Indeed, for the class of rings, we show that, if time is
$O(diam^{\epsilon})$, for any $\epsilon <1$, then the optimal size of advice is $\Theta(\log D)$, and, if time is $\Theta(diam)$ (and at most $diam$) then
 this optimal size is $\Theta(\log \log D)$. Thus there is an {\em exponential} increase of difficulty (measured by the size of advice) between selection in time $O(diam^{\epsilon})$, for any $\epsilon <1$,
and selection in time $\Theta(diam)$. As for the comparison between election and selection, our results show that, perhaps surprisingly, while  for small time, the difficulty of these two tasks 
 on rings is similar, for time  $\Theta(diam)$ the difficulty of election (measured by the size of advice) is exponentially larger than that of selection.
\vspace{2ex}

\noindent {\bf Keywords:} election, selection, maximum finding, advice, deterministic distributed algorithm, time. 
\end{abstract}

\vfill

\vfill

\thispagestyle{empty}
\setcounter{page}{0}
\pagebreak

\section{Introduction}

{\bf Background.} Finding the node with the largest label in a labeled network is one of the fundamental problems in distributed computing.
This is the way in which {\em leader election} is usually solved. (In leader election, one node of a network has to become a {\em leader}
and all other nodes have to become {\em non-leaders}). In fact, to the best of our knowledge, all existing leader election algorithms
performed in labeled networks choose, as leader, the node with the largest label or the node with the smallest label \cite{Ly}. The classic problem
of leader election first appeared in the study of local area token ring networks \cite{LL}, where, at all times, exactly one node (the owner of a circulating token) has the right to initiate
communication. When the token is accidentally lost, a leader is elected as the initial owner of the token.

\noindent
{\bf Model and Problem Description.} The network is modeled as an undirected connected graph with $n$ labeled nodes and with diameter $diam$ at most $D$.
We denote by $\diam{G}$ the diameter of graph $G$.
Labels are drawn from the set of integers $\{1,\dots,L\}$, where $L$ is polynomial in $n$. Each node has a distinct label.
Initially each node knows its label and its degree. The node with the largest label in a graph will be called its {\em largest node}.

We use the extensively studied $\cal{LOCAL}$ communication model \cite{Pe}. In this model, communication proceeds in synchronous
rounds and all nodes start simultaneously. In each round, each node
can exchange arbitrary messages with all of its neighbours and perform arbitrary local computations. For any $r \geq 0$ and any node $v$, we use
$K(r,v)$ to denote     
the knowledge acquired by $v$ within $r$ rounds. Thus, $K(r,v)$  consists of the subgraph induced by all nodes at distance at most $r$
from $v$, except for the edges joining nodes at distance exactly $r$ from $v$, and of degrees (in the entire graph) of all nodes at distance exactly $r$ from $v$. Hence,
if no additional knowledge is provided {\em a priori} to the nodes, the decisions of $v$ in round $r$ in any deterministic algorithm are a function of $K(r,v)$. We denote by $\lambda(r,v)$ the set of labels of nodes in the subgraph induced by all nodes at distance at most $r$
from $v$.
The {\em time} of a task is the minimum number of rounds sufficient to complete it by all nodes. 

It is well known that the synchronous process of the $\cal{LOCAL}$  model can be simulated in an asynchronous network. This can be achieved 
by defining for each node separately its asynchronous round $i$;
in this round, a node performs local computations, then sends messages stamped $i$ to all neighbours, and  waits until it gets messages stamped $i$ from all neighbours.
To make this work, every node is required to send at least one (possibly empty) message with each stamp, until termination.
All of our results can be translated for asynchronous networks by replacing ``time of completing a task''  by ``the maximum number of asynchronous rounds  to complete it, taken over all nodes''.

We consider two distinct tasks in which the largest node is found deterministically.
In {\em selection}, this node has to output 1 and all other nodes have to output 0. In {\em election}, all nodes must output the largest label.
Note that in election all nodes perform selection and additionally learn the identity of the largest node.
Both variations are useful in different applications. In the aforementioned application of recovering a lost
token, selection is enough, as the chosen node will be the only one to get a single token and then the token will be passed from node to node. In this case,
other nodes do not need to know the identity of the chosen leader. The situation is different if all nodes must agree on a label of one of the nodes, e.g., to use it later
as a common parameter for further computations. Then the full strength of election is needed. Likewise, learning the largest label by all nodes is important when
labels carry some additional information apart from the identities of nodes, e.g., some values obtained by sensors located in nodes. Our results also remain valid
in such situations, as long as the  ``informative label'' can be represented as an integer polynomial in $n$.  

Our aim is to compare the difficulty of the two seemingly similar tasks of selection and election executed under stringent running time constraints. 
The measure of difficulty is the amount of information
 that nodes of the network must initially possess in order to solve the given task in an imposed amount of time. Following the standard framework of {\em algorithms
with advice}, see, e.g.,   \cite{DP,EFKR,FGIP,FKL,FP,IKP,SN}, this information (a single binary string) is provided to all nodes at the start by an oracle knowing the entire graph. The length of this string is called the {\em size of advice}. 

The paradigm of algorithms with advice has a far-reaching importance in the realm of network algorithms. Establishing a tight bound on the minimum size of advice sufficient to accomplish a given task permits to rule out
entire classes of algorithms and thus focus only on possible candidates. For example, if we prove that $\Theta(\log n)$ bits of advice are needed to perform a certain task in $n$-node graphs, this rules out all 
potential algorithms that can work using only some linear upper bound on the size of the network, as such an upper bound could be given
to the nodes using $\Theta(\log \log n)$ bits by providing them with
$\lceil \log n \rceil$. Lower bounds on the size of advice
give us impossibility results based strictly on the \emph{amount} of initial knowledge outlined in a model's description.
This more general approach should be contrasted with
traditional results that focus on specific \emph{kinds} of information available to nodes, such as the size, diameter, or maximum node degree.

\noindent
{\bf Our results.} Consider the class of $n$-node graphs with any diameter $diam \leq D$, for some integer $D$.
First observe that if time is larger than $diam$, then both tasks can be solved without advice,
as nodes learn the entire network and {\em learn that they have learned it}. Thus they can just choose the maximum of all labels seen.

For the task of {\em election}, we show that if time is smaller than $diam$, then the optimal size of advice is $\Theta(\log n)$,
and if time is exactly $diam$, then the optimal size of advice is $\Theta(\log D)$. 
Here our contribution consists in proving two lower bounds on the size of advice. We prove one lower bound by exhibiting, for any positive integers $D<n$, networks of size $\Theta(n)$ and diameter $\Theta(D)$, for which
$\Omega(\log n)$ bits of advice are needed for election in time below $diam$. To prove the other lower bound, we present, for any positive integer $D$, networks of diameter 
$diam \leq D$ for which $\Omega(\log D)$ bits of advice are needed for election in time exactly $diam$. 
These lower bounds are clearly tight, as even for time 0, $O(\log n)$ bits of advice are enough to provide the largest label in the network, and, for time $diam$ --
when nodes know the entire network, but they {\em do not know that they know it} -- $O(\log D)$ bits of advice are enough to give the diameter $diam$ to all nodes and thus reassure them that they have seen everything. In this case,  they can safely choose the maximum of all labels seen.

Hence, a high-level statement of our results for election is the following. If time is too small for all nodes to see everything, then no more efficient help in election is possible
than just giving the largest label. If time is sufficient for all nodes to see everything, but too small for them to realize that they do, i.e., the time is exactly $diam$, then no more efficient help in election is possible than providing $diam$, which supplies nodes with the missing certainty that they have seen everything.

It should be noted that $n$ could be exponential
in $D$, as in hypercubes, or $D$ can even be constant with respect to arbitrarily large  $n$. Thus, our results for election show that, for some networks, there is an exponential (or even
larger) gap of difficulty of election (measured by the size of advice) between time smaller than $diam$ and time exactly $diam$. Another such huge gap is between time
$diam$, when advice of size $\Theta(\log D)$ is optimal, and time larger than $diam$, when 0 advice is enough. These gaps could be called {\em intra-task}
jumps in difficulty for election with respect to time.

For the task of {\em selection}, the situation changes dramatically,
even within the class of rings. Indeed, for the class of rings, we show that, if time is
$O(diam^{\epsilon})$, for any $\epsilon <1$, then the optimal size of advice is $\Theta(\log D)$, and, if time is $\Theta(diam)$ (and at most $diam$) then
 this optimal size is $\Theta(\log \log D)$. Here our contribution is three-fold. For selection in time $O(diam^{\epsilon})$, for any $\epsilon <1$,  we exhibit, for any
 positive integer $D$, a class of rings with diameter at most $D$ which requires advice of size $\Omega(\log D)$. As before, this lower bound is tight, even for time 0.
Further, for selection in time at most $\alpha \cdot diam$, for $\alpha \leq 1$, we construct a class of rings with diameter $diam \leq D$ which requires advice of size $\Omega(\log\log D)$. At first glance, it might seem that this lower bound is too weak. Indeed, providing  either the diameter or the largest label, which are both natural choices of advice, would not give a tight upper bound, as this may require $ \Theta(\log D)$ bits. However, we use a more sophisticated idea that permits us to construct a very compact advice 
(of matching size $O(\log\log D)$) and we design a selection algorithm,
working for all rings of diameter $diam \leq D$ in time at most $\alpha \cdot diam$, for which this advice is enough.

 Thus there is an {\em exponential} increase of difficulty (measured by the size of advice) between selection in time $O(diam^{\epsilon})$, for any $\epsilon <1$,
and selection in time $\Theta(diam)$. As in the case of election, another huge increase of difficulty occurs between time $diam$ and time larger than $diam$.  These gaps could be called {\em intra-task} jumps of difficulty for selection with respect to time. 

 As for the comparison between election and selection, our results show that, perhaps surprisingly, while  for small time, the difficulty of these two tasks 
 on rings is similar, for time  $\Theta(diam)$ the difficulty of election (measured by the size of advice) is exponentially larger than that of selection,
even for the class of rings.
 While both in selection and in election the unique leader having the maximum label is chosen, these tasks differ in how widely this label is known. It follows from our results that,
if linear time (not larger than the diameter) is available, then the increase of difficulty (in terms of advice) of making this knowledge widely known is exponential.
This could be called the {\em inter-task} jump of difficulty between election and selection.  Figure \ref{summarytable} provides a summary of our results.

\begin{figure}[!ht]
\begin{center}
\includegraphics[scale=1]{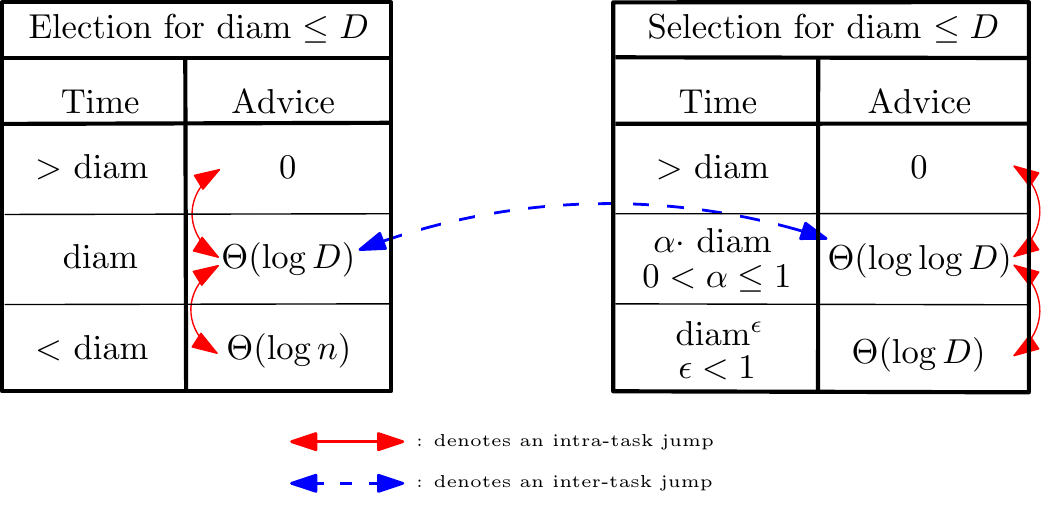}
\end{center}
\caption{Tight bounds on the size of advice for election and selection in arbitrary graphs and rings (respectively) with diameter diam $\leq D$}
\label{summarytable}
\end{figure}


\noindent
{\bf Related work.}
The leader election problem was introduced in \cite{LL}. This problem  has been extensively studied in the scenario adopted in the present paper, i.e.,
where all nodes have distinct labels. As far as we know, this task was always solved by finding either the node with the largest or that with the smallest label.
Leader election was first studied for rings.
A synchronous algorithm based on label comparisons and using
$O(n \log n)$ messages was given in \cite{HS}. It was proved in \cite{FL} that
this complexity is optimal for comparison-based algorithms. On the other hand, the authors showed
an algorithm using a linear number of messages but requiring very large running time.
An asynchronous algorithm using $O(n \log n)$ messages was given, e.g., in \cite{P}, and
the optimality of this message complexity was shown in \cite{B}. Deterministic leader election in radio networks has been studied, e.g., 
in \cite{JKZ,KP,NO}, as well as randomized leader election, e.g., in \cite{Wil}. In \cite{HKMMJ}, the leader election problem was
approached in a model based on mobile agents for networks with labeled nodes.

Many authors \cite{An,AtSn,ASW,BV,YK2,YK3} studied leader election
in anonymous networks. In particular, \cite{BSVCGS,YK3} characterize message-passing networks in which
leader election can be achieved when nodes are anonymous.
Characterizations of feasible instances for leader election were provided in~\cite{C,CM}.
Memory needed for leader election in unlabeled networks was studied in \cite{FP}.

Providing nodes or agents with arbitrary kinds of information that can be used to perform network tasks more efficiently has previously been
proposed in \cite{AKM01,CFP,DP,EFKR,FGIP,FIP1,FIP2,FKL,FP,FPR,GPPR02,IKP,KKKP02,KKP05,SN,TZ05}. This approach was referred to as
{\em algorithms with advice}.  
The advice is given either to nodes of the network or to mobile agents performing some network task.
In the first case, instead of advice, the term {\em informative labeling schemes} is sometimes used, if (unlike in our scenario) different nodes can get different information.

Several authors studied the minimum size of advice required to solve
network problems in an efficient way. 
 In \cite{KKP05}, given a distributed representation of a solution for a problem,
the authors investigated the number of bits of communication needed to verify the legality of the represented solution.
In \cite{FIP1}, the authors compared the minimum size of advice required to
solve two information dissemination problems using a linear number of messages. 
In \cite{FKL}, it was shown that advice of constant size given to the nodes enables the distributed construction of a minimum
spanning tree in logarithmic time. 
In \cite{EFKR}, the advice paradigm was used for online problems.
In \cite{FGIP}, the authors established lower bounds on the size of advice 
needed to beat time $\Theta(\log^*n)$
for 3-coloring cycles and to achieve time $\Theta(\log^*n)$ for 3-coloring unoriented trees.  
In the case of \cite{SN}, the issue was not efficiency but feasibility: it
was shown that $\Theta(n\log n)$ is the minimum size of advice
required to perform monotone connected graph clearing.
In \cite{IKP}, the authors studied radio networks for
which it is possible to perform centralized broadcasting in constant time. They proved that constant time is achievable with
$O(n)$ bits of advice in such networks, while
$o(n)$ bits are not enough. In \cite{FPR}, the authors studied the problem of topology recognition with advice given to nodes. 
To the best of our knowledge, the problems of leader election or maximum finding with advice have never been studied before.

\section{Election}

Notice that in order to perform election in a graph $G$ in time larger than its diameter, no advice is needed. Indeed, after time $\diam{G}+1$, all nodes
know the labels of all other nodes, and  they are aware that they have this knowledge. This is because, in round $\diam{G}+1$, no messages containing new labels are received by any node. So, it suffices for all nodes to output the largest of the labels that they have seen up until round $\diam{G}+1$. 

Our first result shows that, if election time is no more than the diameter of the graph, then the size of advice must be at least logarithmic in the diameter. 
This demonstrates a dramatic difference between the difficulty of election in time $\diam{G}$ and election in time $\diam{G}+1$, measured by the minimum size of advice.

\begin{theorem}\label{diam}
Consider any algorithm ${\cal A}$ such that, for every graph $G$, algorithm ${\cal A}$ solves election within $\mathrm{diam}(G)$ rounds. For every integer $D \geq 2$, there exists a ring of diameter at most $D$ for which algorithm ${\cal A}$ requires advice of size $\Omega(\log D)$.
\end{theorem}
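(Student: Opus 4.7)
The plan is to exhibit a family of $\Omega(D)$ rings of diameter at most $D$ with pairwise distinct maximum labels, but whose local views from a designated node coincide through enough rounds that any two rings in the family must receive distinct advice strings. By pigeonhole, this forces the advice size to be $\Omega(\log D)$, and in particular, some ring in the family needs advice of that size.

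For each integer $d$ with $\lceil D/2 \rceil \leq d \leq D$, I define the ring $R_d$ on $2d+1$ nodes arranged cyclically as $v_0, v_1, \ldots, v_d, v_{-d}, v_{-d+1}, \ldots, v_{-1}$. I assign labels $v_0 \mapsto 1$, $v_i \mapsto i+1$ for $1 \leq i \leq d$, and $v_{-i} \mapsto D+1+i$ for $1 \leq i \leq d$. All labels are distinct and lie in $[1, 2D+1]$, which is polynomial in $n = 2d+1$. The diameter of $R_d$ is $d \leq D$, and the maximum label of $R_d$ is $D+1+d$, located at $v_{-d}$ and taking a distinct value for each $d$ in the range.

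The crucial indistinguishability claim is: for every pair $d < d'$ in the range and every $r \leq d$, the view $K(r, v_0)$ in $R_d$ is identical, up to the natural identification of equally-labeled nodes, to $K(r, v_0)$ in $R_{d'}$. For $r < d$, the nodes within distance $r$ from $v_0$ are the same $2r+1$ nodes $v_{-r}, v_{-r+1}, \ldots, v_r$ in both rings, the induced subgraph is a path, and there are no edges between distance-$r$ nodes in either ring. For $r = d$, the subgraph induced in $R_d$ by all its $2d+1$ nodes is the whole cycle, but $K(d, v_0)$ omits the single edge between the two distance-$d$ nodes $v_d$ and $v_{-d}$, producing the path $v_{-d}, v_{-d+1}, \ldots, v_0, \ldots, v_d$; in $R_{d'}$, the nodes $v_d$ and $v_{-d}$ are not adjacent at all, so the induced subgraph is already that same path. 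In either ring the degrees of $v_d, v_{-d}$ in the full graph are both $2$, and the labels on the shared nodes are given by identical formulas, so the labeled views match exactly.

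Suppose for contradiction that ${\cal A}$ uses advice of size strictly less than $\log(\lfloor D/2 \rfloor + 1)$ on every ring in the family. By pigeonhole, two distinct rings $R_d, R_{d'}$ with $d < d'$ receive the same advice string $a$. Since ${\cal A}$ is deterministic, its decision at $v_0$ in round $r$ (to output and what, or to continue) depends only on $a$ and $K(r, v_0)$; by the indistinguishability claim, this sequence of views is identical in $R_d$ and $R_{d'}$ through round $d$, so ${\cal A}$ behaves identically at $v_0$ in both rings through round $d$. Since ${\cal A}$ must output the correct max $D+1+d$ at $v_0$ in $R_d$ by round $d$, it also outputs $D+1+d$ at $v_0$ in $R_{d'}$ at the same round, contradicting correctness on $R_{d'}$, whose maximum is $D+1+d' \neq D+1+d$. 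Hence some ring in the family requires advice of size $\Omega(\log D)$. The main obstacle is the careful label design that makes both graph structure and labels at corresponding positions coincide across the family up to round $d$; placing small integers on one side of $v_0$, large integers on the other, with the maximum at the farthest ``large'' position $v_{-d}$, accomplishes both goals simultaneously.
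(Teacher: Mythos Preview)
Your proof is correct and takes essentially the same approach as the paper's: your rings $R_d$ are exactly the paper's rings $R^1_d$ (with identical labelings in cyclic order $1,2,\ldots,d+1,D+1+d,\ldots,D+2$), and the indistinguishability argument at the designated node $v_0$ (label~$1$) is the same. The only notable difference is that you restrict to $d \geq \lceil D/2 \rceil$ so that the label space stays linear in the ring size, whereas the paper uses all $d \in \{1,\ldots,D\}$; this costs only a constant in the $\Omega(\log D)$ bound and is arguably a cleaner handling of the ``labels polynomial in $n$'' requirement.
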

\begin{proof}
Fix any integer $D \geq 2$. We will show a stronger statement: at least $D$ different advice strings are needed in order to solve election within $\diam{G}$ rounds for 
rings $G$ with diameter at most $D$. The high-level idea of the proof is to first construct a particular sequence of $D$ rings of increasing sizes, each with a different largest label.
With few advice strings, two such rings get the same advice. We show that there is a node in these two rings which acquires the same knowledge when executing algorithm $\cal A$, and hence has to elect the same node in both rings, which is a contradiction. 

To obtain a contradiction, assume that $D-1$ different advice strings suffice. Consider a ring $R$ of diameter $D$ whose node labels form the sequence $(1,2,\ldots,D+1,2D+1,2D,\ldots,D+2)$ (see Figure \ref{rings}). For each $k \in \{1,\ldots,D\}$, define $R^1_{k}$ to be the ring obtained from $R$ by taking the subgraph induced by the nodes at distance at most $k$ from node $1$ and adding an edge between nodes $k+1$ and $D+k+1$ (see Figure \ref{rings}). First, note that, for each $k \in \{1,\ldots,D\}$, the diameter of $R^1_k$ is $k$, and, the largest node in $R^1_k$ has label $D+k+1$. The correctness of ${\cal A}$ implies that, when ${\cal A}$ is executed at a node in $R^1_k$, it must halt within $k$ rounds and output $D+k+1$. Next, by the Pigeonhole Principle, there exist $i,j \in \{1,\ldots,D\}$ with $i < j$ such that the same advice string is provided to nodes of both $R^1_i$ and $R^1_{j}$ when they execute ${\cal A}$. When executed at node 1 in $R^1_i$, algorithm ${\cal A}$ halts in some round $r \leq \diam{R^1_i} = i$ and outputs $D+i+1$. We show that, when executed at node 1 in $R^1_{j}$, algorithm ${\cal A}$ also halts in round $r$ and outputs $D+i+1$. Indeed, the algorithm is provided with the same advice string for both $R^1_i$ and $R^1_{j}$, and, $K(r,1)$ in $R^1_i$ is equal to  $K(r,1)$ in $R^1_{j}$. This contradicts the correctness of ${\cal A}$ since, for the ring $R^1_{j}$, there is a node with label $D+j+1>D+i+1$. To conclude, notice that, 
since at least $D$ different advice strings are needed for the class of rings of diameter at most $D$, the size of advice must be $\Omega(\log D)$ for at least one
of these rings. 
\end{proof}

\begin{figure}[!ht]
\begin{center}
\includegraphics[scale=1]{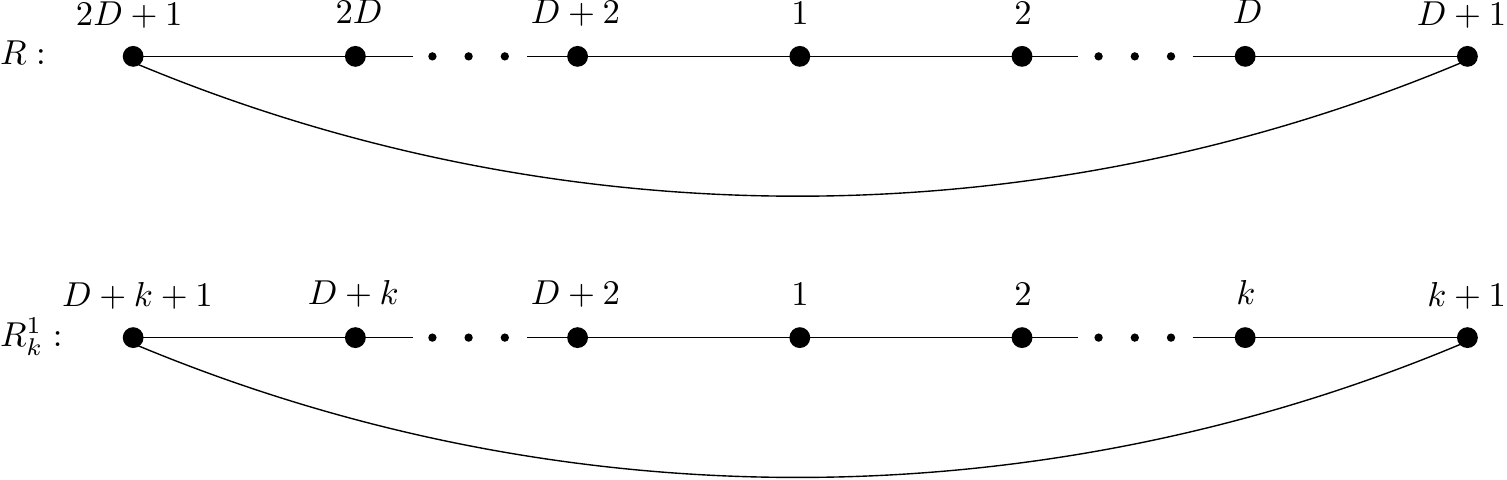}
\end{center}
\caption{Rings $R$ and $R^1_k$, as constructed in the proof of Theorem \ref{diam}}
\label{rings}
\end{figure}

Note that the lower bound established in Theorem \ref{diam} is tight. Indeed, to achieve election in time $\diam{G}$ for any graph $G$, it is enough to provide
the value of $\diam{G}$ to the nodes of the graph and have each node elect the node with the largest label it has seen up until round  $\diam{G}$.
Hence we have the following corollary for the class of graphs of diameter at most $D$.

\begin{corollary}\label{cor1}
The optimal size of advice to complete election in any graph in time at most equal to its diameter is $\Theta(\log D)$.
\end{corollary}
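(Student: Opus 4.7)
The plan is to prove the corollary by matching a lower bound and an upper bound, both of order $\log D$, where the algorithm is quantified over the class of graphs of diameter at most $D$ and the advice size is the maximum over this class (equivalently, the length of the advice string that the oracle is allowed to use on the hardest instance).

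For the lower bound, I would simply invoke Theorem \ref{diam}. That theorem already exhibits, for every $D \geq 2$, a ring of diameter at most $D$ on which any algorithm solving election within $\diam{G}$ rounds must receive advice of size $\Omega(\log D)$. Since rings are a subclass of graphs of diameter at most $D$, this immediately gives the $\Omega(\log D)$ half of the $\Theta(\log D)$ bound for the full class. No extra work is needed here.

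For the upper bound, I would describe the trivial algorithm already sketched in the paragraph preceding the corollary. The oracle, knowing the graph $G$, writes the binary encoding of $\diam{G}$ as the advice string; since $\diam{G} \leq D$, this string has length $\lfloor \log \diam{G} \rfloor + 1 = O(\log D)$ bits and is the same for every node. Each node $v$ then runs the $\cal{LOCAL}$ model for exactly $\diam{G}$ rounds, thereby acquiring $K(\diam{G},v)$. Because $\diam{G}$ is the diameter of $G$, the label set $\lambda(\diam{G},v)$ equals the set of all node labels in $G$. Node $v$ then outputs $\max \lambda(\diam{G},v)$, which is the label of the largest node in $G$ and is identical at every node, so election is solved correctly in time $\diam{G}$.

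Neither direction poses a real obstacle: the lower bound is a direct appeal to the preceding theorem, and the upper bound is a one-line algorithm whose only subtlety is observing that knowing $\diam{G}$ exactly lets every node certify that its radius-$\diam{G}$ view contains every label in the graph (whereas without this value, nodes would not know that they have seen everything, which is precisely the situation handled by Theorem \ref{diam}). Assembling the two bounds then yields the claimed $\Theta(\log D)$.
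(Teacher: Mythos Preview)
Your proposal is correct and matches the paper's own reasoning exactly: the lower bound is a direct citation of Theorem~\ref{diam}, and the upper bound is precisely the algorithm sketched in the paragraph preceding the corollary (provide $\diam{G}$ as advice, run for $\diam{G}$ rounds, output the maximum label seen). There is nothing to add.
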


We next consider accomplishing election in time less than the diameter of the graph. One way to do it is to provide the maximum label to all nodes as advice. This yields 
election in time 0 and uses advice of size $O(\log n)$, where $n$ is the size of the graph,  since the space of labels is of size $L$ polynomial in $n$.
The following result shows that this size of advice cannot be improved for election in {\em any} time below the diameter. This result, when compared to Corollary
\ref{cor1}, again shows the dramatic difference in the difficulty of election (measured by the minimum size of advice) but now between times $\diam{G}-1$ and $\diam{G}$.


\begin{theorem}
Consider any positive integers $\givendiameter < \givensize $. There exists $n \in \Theta( \givensize )$ such that, for any election algorithm $\mathcal{A}$ in which every execution halts within $D-1$ rounds, there exists an $n$-node graph of diameter $D$ for which the size of advice needed by $\mathcal{A}$ is $\Omega(\log{n})$.
\end{theorem}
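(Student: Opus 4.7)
My plan is to follow the pigeonhole strategy of Theorem~\ref{diam} but now applied to a family of $\Omega(n)$ graphs with $n$ nodes and diameter exactly $D$, rather than to rings of growing diameter. First I would fix $n := \tilde n$ (so $n \in \Theta(\tilde n)$ trivially, and $n > D$ by hypothesis) and construct a ``lollipop'' graph $H$ on $n$ nodes: take a path $u_0, u_1, \ldots, u_{D-1}$, together with $n - D$ additional nodes forming a clique $C$ in which every vertex is also adjacent to $u_{D-1}$. A direct distance check gives $\diam{H} = D$: any two clique nodes are within distance $1$, each path node $u_i$ is within distance $D-i$ of $C$, and the pair $(u_0, c)$ with $c \in C$ realises a distance of exactly $D$ (the case $D = 1$ degenerates to $H = K_n$, which has diameter $1$).

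Next I would fix the labels $1, 2, \ldots, D$ on the path nodes and produce a family $\mathcal{F} = \{G_1, \ldots, G_m\}$ of graphs isomorphic to $H$ with $m = \Omega(n)$, chosen so that the largest label of $G_i$ is some value $L_i$ with the $L_i$ pairwise distinct. Since the label space has size $L$ polynomial in $n$, we can pick $n$ distinct values $L_1, \ldots, L_n$ exceeding $n$ and fill the remaining $n - D - 1$ clique slots of $G_i$ with fresh labels from $\{D+1, D+2, \ldots\}$ so that $L_i$ is the unique maximum in $G_i$. The crucial structural observation is that $K(D-1, u_0)$ is identical across every $G_i$: the nodes at distance at most $D-1$ from $u_0$ are exactly the path nodes, whose labels and induced subgraph are fixed; the only datum recorded about distance-$(D-1)$ nodes is the degree of $u_{D-1}$, which depends only on $n$ and $D$ (namely $n - D + 1$, or $n-1$ when $D=1$) and not on any clique label; and the clique nodes all sit at distance exactly $D$, hence are absent from $K(D-1, u_0)$ altogether.

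With this in place the pigeonhole step closes the argument. If every graph in $\mathcal{F}$ received advice of size less than $\log_2 m$, strictly fewer than $m$ distinct advice strings would be used, forcing two graphs $G_i, G_j$ in $\mathcal{F}$ to share a string. Algorithm $\mathcal{A}$ executed at $u_0$ would then see identical advice and identical $K(D-1, u_0)$ in the two graphs, so it would halt by round $D-1$ and produce the same output in both; but correctness demands that this output equal $L_i$ in $G_i$ and $L_j \neq L_i$ in $G_j$, a contradiction. Hence some graph in $\mathcal{F}$ requires advice of size at least $\log_2 m = \Omega(\log n)$.

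The step I expect to be most delicate is the structural verification that $K(D-1, u_0)$ is genuinely independent of the clique labels; this rests on the precise definition of $K$, which excludes nodes at distance exactly $D$ and records only degrees (not labels) of the boundary at distance $D-1$, so I would write that comparison out carefully. Everything else is a routine construction plus a counting argument, and the resulting lower bound matches the trivial $O(\log n)$ upper bound obtained by simply supplying the largest label as advice.
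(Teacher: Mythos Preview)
Your proposal is correct and considerably more elementary than the paper's argument. The paper builds a family of ``fat rings'': it partitions the label space into $n^3$ blocks of size $n/(2D)$, forms every $n$-node graph obtainable by arranging $2D$ such blocks in a cycle of cliques, and then runs a two-sided counting argument. The key technical step (their Claim~2) shows that among graphs receiving a fixed advice string, once you fix $2D-1$ of the $2D$ blocks and the index of the block containing the elected node, the remaining block is determined; comparing the resulting upper bound on the monochromatic class to the pigeonhole lower bound yields $2^b \in \Omega(n)$.

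Your lollipop construction short-circuits all of this by manufacturing a single node $u_0$ whose $(D-1)$-round view is \emph{completely} decoupled from the location of the maximum. This lets you vary only the clique labels, produce $\Theta(n)$ instances with pairwise distinct maxima but identical $K(D-1,u_0)$, and apply pigeonhole directly. The verification you flag as delicate is indeed fine under the paper's definition of $K(r,v)$: the clique sits at distance exactly $D$ from $u_0$, so none of its labels appear, and the only boundary information recorded is the degree of $u_{D-1}$, which is $n-D+1$ independently of the labelling. One minor point worth making explicit is that you need $L \ge 2n-1$ (or so) to have $n$ distinct candidate maxima above $n-1$; this is milder than what the paper implicitly uses (labels up to roughly $n^4/(2D)$), so it is not an obstacle.

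What the paper's approach buys is a lower bound witnessed by vertex-transitive, regular graphs in which no node is distinguished---every node is blind to its antipodal block---whereas your lollipop has a very asymmetric structure with one privileged ``far'' vertex. For the theorem as stated this makes no difference, and your argument is both shorter and requires a smaller label space.
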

\begin{proof}
The high-level idea of the proof is the following. We first construct a family of ``ring-like'' graphs.
For a given number of advice strings, we obtain a lower bound on the number of such graphs for which the same advice is given.
On the other hand, an upper bound on this number is obtained by exploiting the fact that no node can see the entire graph within $D-1$ rounds. Comparing these bounds gives the desired bound on the size of advice.

Let $n$ be the smallest integer greater than $\givensize$ that is divisible by 2D (and note that $n \in \Theta( \givensize )$).

Consider a family $\mathcal{C}$ of $n^3$ pairwise disjoint sets, each of size $\frac{n}{2D}$. In particular, let 
$\mathcal{C}= \{C_0,\ldots,C_{n^3-1}\}$, where
$C_i=\{\frac{ni}{2D}+1,\ldots,\frac{n(i+1)}{2D}\}$.

We construct a family $\mathcal{G}$ of $n$-node graphs. Each graph in $\mathcal{G}$ is obtained by first choosing an arbitrary sequence of $2D$ sets from $\mathcal{C}$, say $(H_0,\ldots,H_{2D-1})$. The nodes of the graph are the elements of these sets (which are integers),
and this induces a natural labeling of the nodes. Next,
for each $j \in \{0,\ldots,2D-1\}$, add all edges between pairs of elements of the set $H_j$, as well as all edges between every element in $H_j$ and 
every element in $H_{j+1}$ (where the indices are taken modulo $2D$). 
In other words, each graph in $\mathcal{G}$ is a ``fat ring'', as illustrated in Figure \ref{fatring}. 
We uniquely identify each graph in $\mathcal{G}$ by its sequence of sets $(H_0,\ldots,H_{2D-1})$, where the node $x$ with the smallest label belongs to the set $H_0$,
 and $H_1$ contains the smallest neighbour of $x$ outside of $H_0$. The size of $\cal G$ is calculated in the following claim.
 
\begin{figure}[!ht]
\begin{center}
\includegraphics[scale=1]{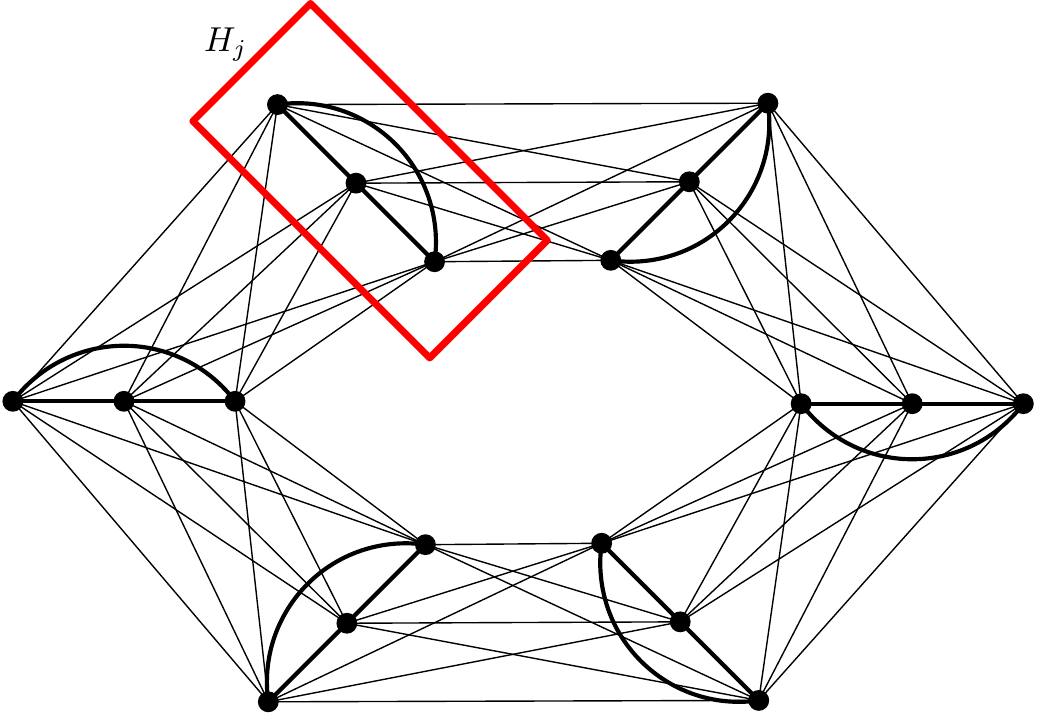}
\end{center}
\caption{A ``fat ring'' with $D=3$ and $n=18$.}
\label{fatring}
\end{figure}

\vspace{3mm}\noindent{\bf Claim 1 } 
 $|{\cal G}|=\binom{n^3}{2D}(2D-1)!/2$.
\vspace{3mm}

To prove the claim, first note that the number of sequences $(H_0,\ldots,H_{2D-1})$ consisting of $2D$ distinct sets from $\mathcal{C}$ is $\binom{n^3}{2D}(2D)!$. To count the number of such sequences belonging to ${\cal G}$, we first divide this integer by $2D$ to eliminate those sequences in which $H_0$ does not contain the smallest label. Then, we divide the result by 2 to eliminate those sequences in which the labels in $H_{2D-1}$ are smaller than those in $H_1$. This completes the proof of the claim.

Next, let $b$ be the maximum number of advice bits provided to ${\cal A}$, taken over all graphs in $\mathcal{G}$. By the Pigeonhole Principle, there exists a family $S$ of at least $\frac{|{\cal G}|}{2^b}$ graphs in $\mathcal{G}$ such that the algorithm receives the same advice string when executed on each graph in $S$.
The following claim will be used to find an upper bound on the size of $S$. 

\vspace{3mm}\noindent{\bf Claim 2 } 
\textit{Consider two graphs from $S$, say $S_1 = (A_0,\ldots,A_{2D-1})$ and $S_2 = (B_0,\ldots,B_{2D-1})$. Suppose that, for some $i \in \{0,\ldots,2D-1\}$, algorithm $\mathcal{A}$ elects a node from set $A_i$ when executed on $S_1$ and elects a node from set $B_i$ when executed on $S_2$. If $A_j = B_j$ for each $j \neq i$, then $A_i = B_i$.}
\vspace{3mm}

To prove the claim, let $k = i + D\ (\!\!\!\mod 2D$).
Since $A_j = B_j$ for all $j \neq i$, it follows that each node $v \in A_k$ is also an element of $B_k$, and, moreover, for each such $v$,
knowledge $K(D-1,v)$ in $S_1$ is equal to knowledge $K(D-1,v)$ in $S_2$.
The nodes of $A_k$ output some label $\ell$ at the end of the execution of $\mathcal{A}$ on $S_1$.
Since we assumed that every execution of $\cal A$ halts within $D-1$ rounds, and the same advice is given for $S_1$ and $S_2$, 
the nodes of $B_k$ also output label $\ell$ at the end of execution of $\mathcal{A}$ on $S_2$. Since $\mathcal{A}$ elects a node from $A_i$ when executed on $S_1$ and elects a node from $B_i$ when executed on $S_2$, it follows that $\ell$ is a label that appears in both $A_i$ and $B_i$. As the sets in $\mathcal{C}$ are pairwise disjoint, it follows that $A_i = B_i$. This concludes the proof of the claim.

Using Claim 2, we now obtain an upper bound on the size of $S$. In particular, for each $i \in \{0,\ldots,2D-1\}$, consider the subfamily of graphs $(H_0,\ldots,H_{2D-1})$ in $S$ such that algorithm $\mathcal{A}$ elects a node from $H_i$. By the claim, for each choice of the $2D-1$ sets $H_0,\ldots,H_{i-1},H_{i+1},\ldots,H_{2D-1}$, there is exactly one set $H_i$ such that $(H_0,\ldots,H_{2D-1})$ belongs to $S$. The number of such choices is bounded above by $(n^3)(n^3-1) \cdots (n^3-2D+2) = \binom{n^3}{2D-1}(2D-1)!$. Since this is true for all $2D$ possible values for $i$, we get that $|S| \leq 2D\binom{n^3}{2D-1}(2D-1)!$. Comparing this upper bound to our lower bound on $|S|$, it follows from Claim 1 that $2D\binom{n^3}{2D-1}(2D-1)! \geq \frac{\binom{n^3}{2D}(2D-1)!/2}{2^b}$. Re-arranging this inequality, we get that $2^b \geq \frac{\binom{n^3}{2D}}{4D\binom{n^3}{2D-1}} = \frac{n^3-2D+1}{8D^2} \in \Omega(n)$, and hence $b\in \Omega(\log n)$.
\end{proof}

Hence we have the following corollary.

\begin{corollary}
The optimal size of advice to complete election in any graph in time less than its diameter is $\Theta (\log n)$.
\end{corollary}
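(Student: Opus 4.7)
The plan is a counting argument via the pigeonhole principle applied to a carefully designed family of diameter-$D$ graphs. The driving idea is that, with only $D-1$ rounds available, no node sees the entire graph, so I can build many distinct graphs whose local views coincide at some strategically chosen node while the correct elected label differs. If this family is exponentially large in $n$, the advice must use $\Omega(\log n)$ bits.

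Concretely, I would choose $n \in \Theta(\tilde n)$ divisible by $2D$ and fix a pool $\mathcal{C}$ of $n^3$ pairwise disjoint label ``blocks,'' each of size $n/(2D)$. A graph in my family $\mathcal{G}$ is a ``fat ring'' obtained by picking an ordered cyclic sequence $(H_0,\ldots,H_{2D-1})$ of $2D$ blocks from $\mathcal{C}$, turning each $H_j$ into a clique, and joining consecutive $H_j, H_{j+1}$ by a complete bipartite graph. Such a graph has $n$ nodes and diameter exactly $D$. Canonicalizing by placing the overall smallest label in $H_0$ and orienting so that $H_1$ contains the smallest neighbor of that node outside $H_0$, I obtain $|\mathcal{G}| = \binom{n^3}{2D}(2D-1)!/2$. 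If the algorithm uses at most $b$ advice bits, then by pigeonhole some family $S \subseteq \mathcal{G}$ of size at least $|\mathcal{G}|/2^b$ shares one advice string.

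The crucial structural step, which I expect to be the main obstacle, is an indistinguishability claim: if two graphs $S_1 = (A_0,\ldots,A_{2D-1})$ and $S_2 = (B_0,\ldots,B_{2D-1})$ in $S$ agree on every block except possibly the $i$-th, and if the algorithm elects a node of $A_i$ on $S_1$ and of $B_i$ on $S_2$, then $A_i = B_i$. The proof rests on two observations. First, the block $H_k$ with $k \equiv i + D \pmod{2D}$ is at distance exactly $D$ from $H_i$, so within $D-1$ rounds the knowledge $K(D-1,v)$ of any $v \in A_k = B_k$ is the same in $S_1$ and $S_2$. Second, because the task is \emph{election}, every node (in particular every node of $A_k$) must output the elected leader's label; with identical advice and identical local knowledge, these outputs must agree across $S_1$ and $S_2$. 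Hence the elected leader's label appears in both $A_i$ and $B_i$, and since blocks in $\mathcal{C}$ are pairwise disjoint, $A_i = B_i$.

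Finally, I would convert this structural claim into an upper bound on $|S|$: grouping the graphs in $S$ by which block index $i$ contains the elected node, and noting that the remaining $2D-1$ blocks determine the $i$-th one, yields $|S| \leq 2D \binom{n^3}{2D-1}(2D-1)!$. Combining with the pigeonhole lower bound $|S| \geq |\mathcal{G}|/2^b$ gives
\[
2^b \;\geq\; \frac{\binom{n^3}{2D}}{4D\binom{n^3}{2D-1}} \;=\; \frac{n^3 - 2D + 1}{8D^2},
\]
and since $D < \tilde n = \Theta(n)$ we have $n^3/D^2 = \Omega(n)$, so $b = \Omega(\log n)$, as desired.
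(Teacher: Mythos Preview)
Your proposal is correct and follows essentially the same approach as the paper: the fat-ring construction over a pool of $n^3$ disjoint label blocks, the canonicalization giving $|\mathcal{G}| = \binom{n^3}{2D}(2D-1)!/2$, the indistinguishability claim via the antipodal block $H_{i+D \bmod 2D}$, and the final counting inequality $2^b \geq (n^3-2D+1)/(8D^2)$ are all identical to the paper's argument. The only omission is the trivial $O(\log n)$ upper bound (provide the maximum label as advice), which the paper mentions just before the theorem and which is needed for the $\Theta(\log n)$ claim.
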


\section{Selection}

In this section, we study the selection problem for the class of rings. It turns out that significant differences between election and selection can be exhibited
already for this class. As in the case of election, and for the same reasons, selection in time larger than the diameter can be accomplished without any advice. Hence, in the rest of the section, we consider selection in time at most equal to the diameter. 
For any ring $R$ and any selection algorithm $\algname$, denote by $\advice{R}$ the advice string provided to all nodes in $R$ when they execute algorithm $\algname$. Denote by $\selected{R}$ the node that outputs 1 in the execution of algorithm $\algname$ on ring $R$.
We first look at selection algorithms working in time at most $\alpha \cdot \diam{R}$, for any ring $R$ and any constant $0<\alpha \leq 1$.

\subsection{Selection in time linear in the diameter}

We start with the lower bound on the size of advice needed by any selection algorithm working in time equal to the diameter of the ring.
This lower bound shows that, for any positive integer $D$, there exists a ring with diameter at most $D$ for which such an algorithm requires 
$\Omega(\log\log D)$ bits of advice. Of course, this implies the same lower bound on the size of advice for selection in any smaller time.

The following theorem provides our first lower bound on the size of advice for selection. 

\begin{theorem}\label{lb1}
Consider any selection algorithm $\algname$ such that, for any ring $R$, algorithm $\algname$ halts within $\diam{R}$ rounds. For every positive integer $D$, there exists a ring $R$ of diameter at most $D$ for which algorithm $\algname$ requires advice of size $\Omega(\log\log D)$.
\end{theorem}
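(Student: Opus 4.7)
My plan is to exhibit a family $\mathcal{F}=\{R_1,\ldots,R_m\}$ of $m=\Theta(\log D)$ rings, each of diameter at most $D$, such that $\algname$ must give pairwise different advice strings to them. Pigeonhole then forces the advice on some ring in $\mathcal{F}$ to have length at least $\lceil\log_2 m\rceil=\Omega(\log\log D)$.

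I build $\mathcal{F}$ inductively. Let $d_k=2^k-1$ and $|R_k|=2d_k+1$ for $k=1,\ldots,m=\lfloor\log_2(D+1)\rfloor$, so every $R_k$ is an odd cycle of diameter $d_k\leq D$. Label the nodes of $R_k$ by the integers $\{1,2,\ldots,|R_k|\}$ and designate the maximum $A_k:=|R_k|$, so that $A_1<A_2<\cdots<A_m$. The base ring $R_1$ is the $3$-node ring with these three labels in cyclic order. Given $R_{k-1}$, I build $R_k$ by cutting $R_{k-1}$ at the antipode edge of $A_{k-1}$, obtaining a path $P_{k-1}$ of $|R_{k-1}|$ nodes with $A_{k-1}$ at its center; embedding $P_{k-1}$ as the initial arc (positions $0,\ldots,|R_{k-1}|-1$) of the cycle $R_k$; filling the remaining $|R_k|-|R_{k-1}|=2(d_k-d_{k-1})$ positions with the fresh labels $|R_{k-1}|+1,\ldots,|R_k|$; and placing $A_k$ at the last position $|R_k|-1$. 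A short arithmetic check (using $d_k=2d_{k-1}+1$) shows that the antipode edge of $A_k$ in $R_k$ lands exactly on the boundary between the embedded $P_{k-1}$ and the new region. This boundary placement is the crucial property that lets the construction iterate without ever destroying the internal structure of previously embedded paths.

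The central observation, proved by induction on $k-i$, is that for every $i<k$ the view $K(d_i,A_i)$ in $R_k$ equals the view $K(d_i,A_i)$ in $R_i$. By construction, $A_i$ sits at the center of an embedded copy of $P_i$ nested inside $R_k$, so the ball of radius $d_i$ around it in $R_k$ consists exactly of that copy; the induced labeled path on this copy is $R_i$ with its antipode edge of $A_i$ deleted, which is precisely the structure that the model extracts for $K(d_i,A_i)$ from $R_i$ itself (the excluded edge is the one between the two distance-$d_i$ nodes); and both distance-$d_i$ endpoints have degree $2$ in their respective rings. Now suppose for contradiction that $R_i$ and $R_j$ with $i<j$ received the same advice $\sigma$. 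In $R_i$ the label $A_i$ is the maximum, so $\algname$ halts at $A_i$ in some round $r\leq d_i$ and outputs $1$. Because a node's behavior at any round depends solely on its current view and the advice, and because $K(r,A_i)$, being a prefix of $K(d_i,A_i)$, is identical in $R_i$ and $R_j$, $\algname$ with advice $\sigma$ also halts at $A_i$ in $R_j$ at round $r$ and outputs $1$; but this contradicts the fact that $A_j>A_i$ is the actual maximum of $R_j$. Thus the $m$ rings of $\mathcal{F}$ need pairwise distinct advice strings, proving the lower bound.

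The main obstacle is engineering the inductive construction so that the whole nested chain of markers $A_1,A_2,\ldots,A_{k-1}$ retains its required local views inside every later ring $R_k$. A naive extension would typically place $A_k$'s antipode edge in the interior of the embedded path, so that the next cut used to build $R_{k+1}$ would destroy the distance-$1$ neighborhood of some earlier $A_i$ and break the indistinguishability. The boundary placement of $A_k$ at position $|R_k|-1$, together with the recurrence $d_k=2d_{k-1}+1$, is exactly what pins the antipode cut to the seam between old and new parts and makes the induction go through; once this geometric step is secure, the indistinguishability and pigeonhole steps are routine.
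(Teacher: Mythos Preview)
Your proposal is correct and takes a genuinely different—and considerably simpler—route than the paper. The paper argues by contradiction that $A<\sqrt{\log D}$ advice strings cannot suffice: it builds a large pool of $3$-rings, then in successive stages pairs up rings sharing an advice string and \emph{glues} their chops (plus a fresh node) into a larger odd ring; a key claim shows that each new stage must consume a fresh advice string, so after $A$ stages all strings are used up, and one final glue produces a ring on which two distinct nodes output~$1$. Your argument instead constructs a single \emph{nested} chain $R_1,\ldots,R_m$ with $m=\lfloor\log_2(D+1)\rfloor$, arranged so that the maximum node $A_i$ of $R_i$ sits at the center of an embedded copy of $\chop{R_i}$ inside every later $R_j$; the recurrence $d_k=2d_{k-1}+1$ is exactly what pins $A_k$'s antipode edge to the seam, so each cut leaves all earlier chops intact. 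Then indistinguishability of $K(d_i,A_i)$ in $R_i$ and $R_j$ forces $A_i$ to output $1$ in $R_j$ whenever the advice coincides, violating selection since $A_j>A_i$. This yields $m=\Theta(\log D)$ pairwise distinct advice strings directly (a slightly better constant than the paper's $\sqrt{\log D}$) and avoids the pairing/counting combinatorics entirely. What the paper's gluing framework buys is modularity: it places two chops side by side rather than nesting them, which makes it easier to force \emph{several} spurious local maxima to coexist in a single witness ring—useful if one wanted to push the technique toward stronger statements—whereas your chain is tailored to exactly this theorem.
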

\begin{proof}
At a high level we consider a ``rings-into-bins'' problem, in which each bin represents a distinct advice string. We recursively construct sets of rings, such that 
the rings constructed at a given recursion level cannot be put into the same bin as previously-constructed rings. We continue the construction long enough to run out of bins. With few bins, the number of levels of recursion is sufficiently small to keep the diameters of the constructed rings bounded by $D$.

The following claim will be used to show that a particular ring that we construct will cause algorithm $\algname$ to fail.
We will use the following {\em chopping} operation in our constructions.
For any selection algorithm $\algname$ and any ring $R$ of odd size, we define the \emph{chop} of $R$, denoted by $\chop{R}$, to be the path obtained from $R$ by removing the edge between the two nodes at distance $\diam{R}$ from $\selected{R}$ (see Figure \ref{chopandglue}(a)).

\vspace{3mm}\noindent{\bf Claim 1 } 
\textit{Consider any selection algorithm $\algname$ such that, for any ring $R$, algorithm $\algname$ halts within $\diam{R}$ rounds. Consider two disjoint rings $R_1, R_2$ of odd size such that $\advice{R_1} = \advice{R_2}$. For any ring $R_3$ that contains $\chop{R_1}$ and $\chop{R_2}$ as subgraphs such that $\advice{R_3} = \advice{R_1} = \advice{R_2}$, two distinct nodes in $R_3$ output 1 when executing $\algname$.}
\vspace{3mm}

In order to prove the claim, 
consider the execution of $\algname$ by the nodes of $R_3$. Using the definition of $\chop{R_1}$, it can be shown that knowledge $K(\diam{R_1},\selected{R_1})$ in $R_3$ is equal to knowledge $K(\diam{R_1},\selected{R_1})$ in $R_1$. Therefore, the execution of $\algname$ at node $\selected{R_1}$ in $R_3$ halts in round $\diam{R_1}$, and $\selected{R_1}$ outputs 1. Similarly, the execution of $\algname$ at node $\selected{R_2}$ in $R_3$ halts in round $\diam{R_2}$, and $\selected{R_2}$ outputs 1. Since $R_1$ and $R_2$ are disjoint, we have that $\selected{R_1} \neq \selected{R_2}$. Hence, two distinct nodes in $R_3$ output 1 when executing $\algname$.
This proves the claim.

It is enough to prove the theorem for sufficiently large $D$.
Fix any integer $D \geq 2^{27}$ and set the label space to be $\{1,\ldots,2D\}$. To obtain a contradiction, assume that $A < \sqrt{\log D}$ different advice strings suffice. 

Form a family ${\cal T}$ of $2^{A}A!$ disjoint sequences of integers, each of size 3. More specifically, let ${\cal T} = \{(3i,3i+1,3i+2)\ |\ i \in \{0,\ldots,2^{A}A!-1\}\}$. Let ${\cal E} = \{3(2^{A}A!)+1, 3(2^{A}A!)+2, \ldots, 4(2^{A}A!)\}$. Note that ${\cal E}$ is a set of integer labels, each of which is larger than all labels that belong to sequences in ${\cal T}$. To verify that we have enough labels to define these sets, note that the largest label $\ell = 4(2^{A}A!) < 4(2^{\sqrt{\log D}}[\sqrt{\log D}]!)$. Using the inequality $\log n! \leq (n+1)\log(n+1) + 1$, we get that $\log \ell< 2 + \sqrt{\log D} + (\sqrt{\log D}+1)\log(\sqrt{\log D}+1) + 1 = 3 + \sqrt{\log D} + \sqrt{\log D}\log(\sqrt{\log D}+1) + \log(\sqrt{\log D}+1)$. When $D \geq 2^{27}$, one can verify that $\log(\sqrt{\log D}+1) \leq \sqrt[3]{\log D}$, and $3 + \sqrt{\log D} + \sqrt[6]{(\log D)^5} + \sqrt[3]{\log D} < \log D$, so $\log \ell < \log D$. It follows that $\ell < D <L$. 

Next, we construct a special family ${\cal R}$ of rings of diameter at most $D$. We will add rings to ${\cal R}$ by following a procedure that we will describe shortly. Each new ring that we add to ${\cal R}$ will be the result of at most one `gluing' operation, denoted by $\glue{R_1}{R_2}$, that takes two disjoint odd-sized rings $R_1,R_2$ and forms a new odd-sized ring $R_3$. More specifically, the $i^{th}$ gluing operation takes two disjoint odd-sized rings $R_1$ and $R_2$ and forms the new ring
defined as follows: construct paths $\chop{R_1}$ and $\chop{R_2}$,  add a new edge between the leaf of $\chop{R_1}$ with smaller label and the leaf of $\chop{R_2}$ with smaller label, add a new node
with label $3(2^{A}A!)+i$, and add edges from this new node to the two remaining leaves.  
The gluing operation is illustrated in Figure \ref{chopandglue}(b). The additional node is introduced so that the resulting ring has an odd number of nodes. Note that this additional node's label comes from ${\cal E}$, which ensures that the new ring formed by a gluing operation does not contain duplicate labels. Further, note that, due to the dependence of the additional node's label on $i$, no two gluing operations introduce additional nodes with the same label.

\begin{figure}[!ht]
\begin{center}
\includegraphics[scale=0.8]{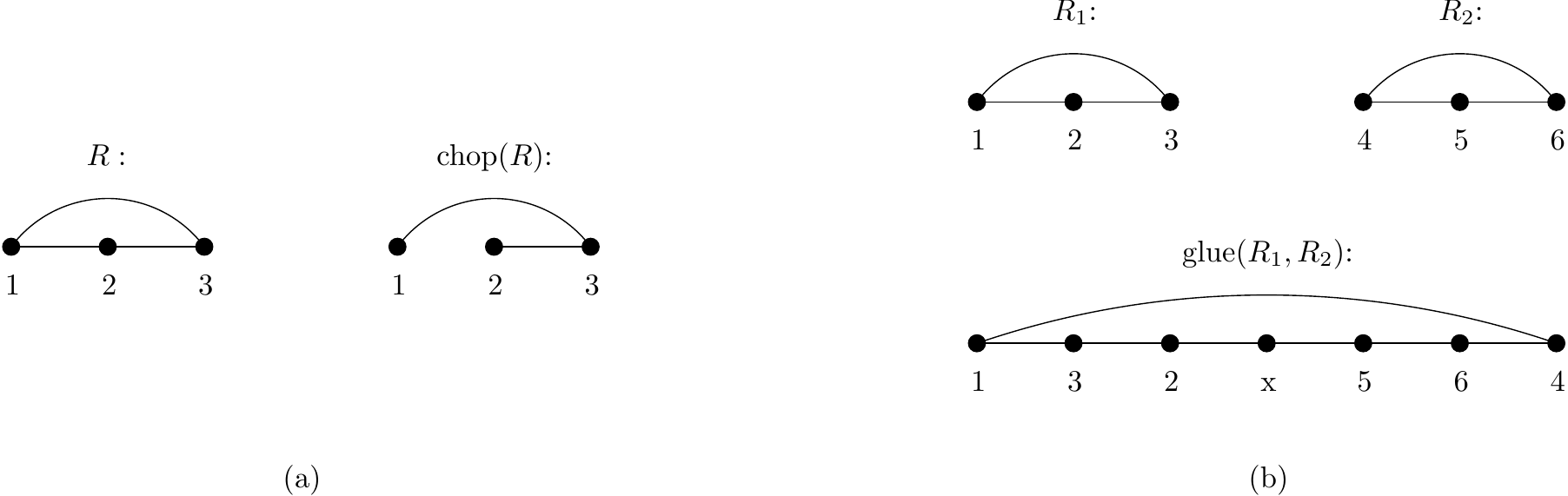}
\end{center}
\caption{(a) The chop operation on a ring of size 3: the edge between the two nodes at distance $\diam{R}=1$ from node $\selected{R} = 3$ is removed. (b) The glue operation on two rings of size 3: concatenate paths $\chop{R_1}$, $\chop{R_2}$ and add an additional node $x$ with a label from ${\cal E}$.}
\label{chopandglue}
\end{figure}

We now describe the procedure for adding rings to ${\cal R}$. In stage 1, we consider the set $G_1$ of 3-cliques obtained from each sequence in ${\cal T}$ by identifying each integer with a node and adding all edges between them. We take a subset $H_1 \subseteq G_1$ of size at least $2^{A}(A-1)!$ such that the same advice string is provided to the algorithm for each ring in $H_1$. The rings in $H_1$ are added to ${\cal R}$, which concludes stage 1. In each stage $j \in \{2,\ldots,A\}$, we consider the set $H_{j-1}$ of rings that were added to ${\cal R}$ in stage $j-1$. The elements of $H_{j-1}$ are partitioned into $|H_{j-1}|/2$ pairs of rings in an arbitrary way. For each such pair $R_1,R_2$, we perform $\glue{R_1}{R_2}$. Define $G_j$ to be the set of all of the resulting rings.   We take a subset $H_j \subseteq G_j$ of size at least $|G_j|/(A-j+1)$ such that the same advice string is provided to the algorithm for each ring in $H_j$. The rings in $H_j$ are added to ${\cal R}$, which concludes stage $j$. This concludes the construction of ${\cal R}$.

It is not immediately clear that this construction can always be carried out. In particular, in order to define $H_j$ in each stage $j \in \{1,\ldots,A\}$,
there must exist $|G_j|/(A-j+1)$ rings in $G_j$ such that the same advice string is provided to the algorithm for each. To prove this fact, and to obtain the desired contradiction to prove the theorem, we will use the following two claims. 

\vspace{3mm}\noindent{\bf Claim 2 } 
\textit{Consider any $k \in \{1,\ldots,A\}$ and any ring $R \in H_k$. For every $j < k$, there exists a ring $Q_j \in H_j$ such that $\chop{R}$ contains $\chop{Q_j}$ as a subgraph.}
\vspace{3mm}

To prove the claim, we proceed by induction on $k$. The case where $k=1$ is trivial. Next, assume that, for some $k \in \{1,\ldots,A\}$ and any ring $R \in H_k$, for every $j < k$, there exists a ring $Q_j \in H_j$ such that $\chop{R}$ contains $\chop{Q_j}$ as a subgraph. We now prove that, for any $R \in H_{k+1}$, there exists a ring $Q_k \in H_k$ such that $\chop{R}$ contains $\chop{Q_k}$ as a subgraph. By construction, $R = \glue{R_1}{R_2}$ for some disjoint rings $R_1,R_2 \in H_k$. By the definition of the gluing operation, $R$ contains both $\chop{R_1}$ and $\chop{R_2}$ as disjoint subgraphs. Since $\chop{R}$ has one fewer edge than $R$, at least one of  $\chop{R_1}$ and $\chop{R_2}$ is a subgraph of $\chop{R}$. By induction, this proves Claim 2.

\vspace{3mm}\noindent{\bf Claim 3 } 
\textit{Consider any $j,k \in \{1,\ldots,A\}$ with $j < k$. For any ring $W \in H_k$, there exist disjoint $P_j,Q_j \in H_j$ such that $W$ contains $\chop{P_j}$ and $\chop{Q_j}$ as subgraphs.}
\vspace{3mm}

To prove the claim, note that, by our construction,  $W = \glue{R_1}{R_2}$ for some disjoint $R_1,R_2 \in H_{k-1}$. It follows that $\chop{R_1}$ and $\chop{R_2}$ are disjoint subgraphs of $W$. If $j=k-1$, setting $P_j = R_1$ and $Q_j = R_2$ satisfies the statement of the claim. If $j < k-1$, then, by Claim 2, there exists a ring $P_j \in H_j$ such that $\chop{R_1}$ contains $\chop{P_j}$ as a subgraph. Similarly, there exists a ring $Q_j \in H_j$ such that $\chop{R_2}$ contains $\chop{Q_j}$ as a subgraph. Note that, since $R_1$ and $R_2$ are disjoint, it follows that $\chop{P_j}$ and $\chop{Q_j}$ are disjoint, so $P_j$ and $Q_j$ are disjoint. Thus, $P_j$ and $Q_j$ satisfy the statement of the claim, which concludes its proof.

We now show that, in any fixed stage of the above construction of ${\cal R}$, the advice string that is provided for the rings added to ${\cal R}$ in this stage is different than the advice strings provided for the rings added to ${\cal R}$ in all previous stages.

\vspace{3mm}\noindent{\bf Claim 4 } 
\textit{Consider any $j,k \in \{1,\ldots,A\}$ with $j < k$. For any ring $W_1 \in H_j$ and any ring $W_2 \in H_k$, we have $\advice{W_1} \neq \advice{W_2}$.}
\vspace{3mm}

To prove the claim, notice that, by Claim 3, there exist disjoint $P_j,Q_j \in H_j$ such that $W_2$ contains $\chop{P_j}$ and $\chop{Q_j}$ as subgraphs. Recall that the algorithm is provided the same advice string for all graphs in $H_{j}$, so $\advice{P_j} = \advice{Q_j} = \advice{W_1}$. By Claim 1 and the correctness of $\cal A$, it follows that $\advice{W_2} \neq \advice{P_j}$. Hence, $\advice{W_2} \neq \advice{W_1}$, which completes the proof of Claim 4.

We show that the construction of ${\cal R}$ can always be carried out. 

\vspace{3mm}\noindent{\bf Claim 5 } 
\textit{For all $k \in \{1,\ldots,A\}$, in stage $k$ of the construction, there exist at least $|G_k|/(A-k+1)$ rings in $G_k$ such that the same advice string is provided to the algorithm for each of them.}
\vspace{3mm}

To prove the claim, first note that, for $k=1$, there are $A$ different strings that could be used as advice for rings in $G_1$. So, by the Pigeonhole Principle, there are at least $|G_1|/A$ rings in $G_1$ such that the same advice string is provided to the algorithm for each of them. Next, for any $k \in \{2,\ldots,A\}$, Claim 4 implies that there are $k-1$ strings that are not provided to the algorithm as advice for rings in $G_k$. Namely, there are at most $A-k+1$ different strings used as advice for rings in $G_k$. By the Pigeonhole Principle, there are at least $|G_k|/(A-k+1)$ rings in $G_k$ such that the same advice string is provided to the algorithm for each of them, which proves the claim.

The following claim implies that $|H_j| \geq 2$, for all $j \leq A$. Later, we will use two rings from some $H_j$ to obtain the desired contradiction needed to complete the proof of the theorem. 

\vspace{3mm}\noindent{\bf Claim 6 } 
\textit{For all $j \in \{1,\ldots,A\}$, at the end of stage $j$, $|H_j| \geq 2^{A-j+1}(A-j)!$.}
\vspace{3mm} 

We prove the claim by induction on $j$. When $j=1$, we have $|G_1| = |{\cal T}| = 2^{A}A!$. By Claim 5, there exist at least $|G_1|/A = 2^{A}(A-1)!$ rings in $G_1$ such that the same advice string is provided to the algorithm for each of them. This implies that $|H_1| \geq 2^{A}(A-1)!$, as required. As induction hypothesis, assume that, at the end of some stage $j \in \{1,\ldots,A-1\}$, $|H_j| \geq 2^{A-j+1}(A-j)!$. In stage $j+1$, the set $H_{j}$ is partitioned into pairs and $G_{j+1}$ consists of one (glued) ring for each such pair. Thus, $|G_{j+1}| = 2^{A-j}(A-j)!$. By Claim 5, there exist at least $|G_{j+1}|/(A-j) = 2^{A-j}(A-j-1)!$ rings in $G_{j+1}$ such that the same advice string is provided to the algorithm for each of them. This implies that $|H_{j+1}| \geq 2^{A-j}(A-j-1)!$, as required. This concludes the proof of Claim 6.

Finally, we construct a ring $X$ on which algorithm $\algname$ fails. Note that the rings in ${\cal R}$ all have node labels from the sets ${\cal T}$ and ${\cal E}$, and we proved that the largest integer in these sets is less than $D$. Thus, the rings in ${\cal R}$ all have node labels from  the range $\{1,\ldots, D\}$. To construct $X$, we take any two (disjoint) rings $R_1,R_2 \in H_A$ (which exist by Claim 6) and form a ring $X$ consisting of the concatenation of paths $\chop{R_1}$, $\chop{R_2}$, and the path of $D$ nodes with labels $D+1,\ldots,2D$. Recall that, in the construction, all rings in $H_j$, for any fixed $j \leq A$, get the same advice string.   
By Claim 4, each of the $A$ distinct advice strings is used for rings in some $H_j$. Therefore, there exists a stage $j$ such that the advice provided for all graphs in $H_j$ is the string $\advice{X}$. By Claim 2, there exists a ring $P_j \in H_j$ such that $\chop{R_1}$ contains $\chop{P_j}$ as a subgraph, and, there exists a ring $Q_j \in H_j$ such that $\chop{R_2}$ contains $\chop{Q_j}$. It follows that $X$ contains $\chop{P_j}$ and $\chop{Q_j}$ as subgraphs, and $\advice{X} = \advice{P_j} = \advice{Q_j}$. By Claim 1, when algorithm $\algname$ is executed by the nodes of ring $X$, two distinct nodes output 1, which contradicts the correctness of $\algname$. Note that the size of $X$ is at least $D$ and at most $2D$, which implies that the size of the label space is linear in the size of $X$. The obtained contradiction was due to the assumption that $A < \sqrt{\log D}$. Hence the number $A$ of different advice strings is at least $\sqrt{\log D}$, which implies that the size of advice is
$\Omega (\log\log D)$ for some ring of diameter at most $D$.
\end{proof}

Since imposing less time cannot make the selection task easier, we have the following corollary.

\begin{corollary}\label{cor}
For any constant $\alpha \leq 1$,
consider any selection algorithm $\algname$ such that, for any ring $R$, algorithm $\algname$ halts within $\alpha \cdot \diam{R}$ rounds.  For every positive integer $D$, there exists a ring $R$ of diameter at most $D$ for which algorithm $\algname$ requires advice of size $\Omega(\log\log D)$.
\end{corollary}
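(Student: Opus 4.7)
The corollary is an immediate consequence of Theorem 3.1, and the plan is simply to argue that restricting the running time can only make the task harder, not easier.

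First I would observe that any selection algorithm $\algname$ that halts within $\alpha \cdot \diam{R}$ rounds on every ring $R$ is, in particular, a selection algorithm that halts within $\diam{R}$ rounds on every ring $R$ (since $\alpha \leq 1$ and thus $\alpha \cdot \diam{R} \leq \diam{R}$). More formally, if we view $\algname$ as a function from (local knowledge, advice) to outputs that terminates by round $\alpha \cdot \diam{R}$, then trivially the same algorithm is also a correct selection algorithm under the weaker time bound of $\diam{R}$ rounds; we can even pad it with do-nothing rounds if a syntactic match is desired.

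Then I would apply Theorem 3.1 to $\algname$ viewed as an algorithm running in time at most $\diam{R}$. The theorem guarantees that, for every positive integer $D$, there is a ring $R$ of diameter at most $D$ for which $\algname$ requires advice of size $\Omega(\log \log D)$. Since this same algorithm $\algname$ was assumed to run within $\alpha \cdot \diam{R}$ rounds, the same lower bound applies to it in the faster time regime, which is precisely the statement of the corollary.

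There is no real obstacle here; the one thing to be careful about is the quantifier order. The corollary asserts a lower bound of the form ``for every algorithm, there exists a bad ring,'' matching exactly the conclusion of Theorem 3.1. Since the class of algorithms halting within $\alpha \cdot \diam{R}$ rounds is a subclass of those halting within $\diam{R}$ rounds, the existential bad ring supplied by Theorem 3.1 serves equally well for the corollary. Hence the bound $\Omega(\log \log D)$ transfers immediately.
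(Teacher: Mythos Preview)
Your proposal is correct and matches the paper's approach exactly: the paper simply remarks that imposing less time cannot make the selection task easier, which is precisely your observation that an algorithm halting within $\alpha \cdot \diam{R}$ rounds is in particular one halting within $\diam{R}$ rounds, so Theorem~\ref{lb1} applies directly.
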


We now establish an upper bound on the size of advice that matches the lower bound from Theorem \ref{lb1}.
Let $D$ be any positive integer and let $L$  be a power of 2.  We consider algorithms that solve selection on the class of rings with diameter at most $D$ and labels from $\{1,\ldots,L\}$. Recall that we assume that $L$ is polynomial in the size of the ring, and hence also in $D$.


In order to prove the upper bound, we propose a family of selection algorithms such that, for each fixed $\alpha \in (0,1]$, there is an algorithm in the family that takes $O(\log\log D)$ bits of advice, and, for each ring $R$, halts within $\alpha \cdot \diam{R}$ rounds. 

We start with an informal description of the algorithm and the advice for any fixed $\alpha \in (0,1]$. The algorithm consists of two stages, and the advice consists of two substrings $A_1$ and $A_2$.

For any ring $R$, the substring $A_1$ of the advice is the binary representation of the integer $a_1=\lfloor\log(\diam{R})\rfloor$. The size of this advice is $O(\log\log D)$. Note that $2^{a_1} \leq \diam{R} < 2^{a_1+1}$.

In stage 1 of the algorithm the nodes perform $r = \lfloor\alpha 2^{a_1}\rfloor$ communication rounds, after which each node $v$ has acquired knowledge $K(r,v)$. Next, each node $v$ checks if its own label is the largest of the labels it has seen within $r$ communication rounds, i.e., the largest in the set $\lambda(r,v)$. If not, then $v$ outputs 0 and halts immediately. 
Let $\candidates{R}$ be the set of {\em candidate nodes}, i.e., nodes $v$ whose label is the largest in $\lambda(r,v)$. Clearly, the largest node in $R$ is in $C_R$, and every node knows if it belongs to
 $\candidates{R}$. Nodes in $\candidates{R}$ proceed to the next stage of the algorithm. 

In stage 2 of the algorithm, each node in $\candidates{R}$ determines whether or not it is the largest node in $R$, without using any further communication rounds. This is achieved using $A_2$, the second substring of the advice, which we now describe.

Let $\cal V$ be the family of sets  of labels which contain all labels in $C_R$ and no larger labels.
At a high level, we construct an integer colouring $F$ of the family $\cal V$ such that, for any $V \in \cal V$,  
when the colour $F(V)$ is given as advice to candidate nodes, each of them can determine, without any communication, whether or not it is the candidate node with the largest label.
Call such a colouring {\em discriminatory}. 
Substring $A_2$ of the advice will be $F(V)$ for some $V \in \cal V$ and some discriminatory colouring $F$ of $\cal V$. (We cannot simply use $F(C_R)$ because our colouring $F$ will be defined on sets of fixed size, and sets of candidate nodes for different rings do not have to be of equal size.) 
Using $A_2$, the candidate nodes solve selection among themselves.
This concludes the high-level description of the algorithm.

The main difficulty of the algorithm is finding  a discriminatory colouring $F$. 
For example, bijections are trivially discriminatory, as nodes could deduce the set $V$.
However, we cannot use such a colouring. Indeed, the colouring must use few colours, otherwise the advice would be too large.
We will be able to construct a discriminatory colouring with few colours  using the fact that the number of candidate nodes is bounded by a constant that depends only on $\alpha$, as given in the following lemma. 


\begin{lemma}\label{boundcandidates}
$1 \leq |\candidates{R}| <8/\alpha$.
\end{lemma}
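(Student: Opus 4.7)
The plan is as follows. The lower bound $|\candidates{R}|\ge 1$ is immediate: the node of $R$ with the globally largest label is maximum in every set $\lambda(r,\cdot)$ that contains it, so it lies in $\candidates{R}$.

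For the upper bound, my approach has two steps. First, I would show that any two distinct candidates in $R$ are at ring-distance strictly greater than $r$. Indeed, if $u,v\in\candidates{R}$ with $\dist_R(u,v)\le r$, then the label of $v$ lies in $\lambda(r,u)$, so (since $u$ is a candidate) it is strictly smaller than the label of $u$; symmetrically, $u$'s label is strictly smaller than $v$'s, a contradiction. Second, I would invoke the standard fact that on a cycle of $n$ nodes, any set of points pairwise at cyclic distance $\ge r+1$ has size at most $\lfloor n/(r+1)\rfloor$ (each of the $k$ arcs between consecutive candidates has length $\ge r+1$, and they sum to $n$). Hence whenever $|\candidates{R}|\ge 2$ we get $|\candidates{R}|\le n/(r+1)$; when $|\candidates{R}|=1$, the desired bound $1<8/\alpha$ is trivial for $\alpha\le 1$.

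The remaining step is arithmetic. A ring of diameter $\diam{R}$ has $n\le 2\diam{R}+1$ nodes. From $2^{a_1}\le\diam{R}<2^{a_1+1}$ I get $2^{a_1}>\diam{R}/2$, and from $r=\lfloor\alpha\cdot 2^{a_1}\rfloor$ the floor inequality gives $r+1\ge\alpha\cdot 2^{a_1}>\alpha\cdot\diam{R}/2$. Combining,
\[
|\candidates{R}|\ \le\ \frac{n}{r+1}\ <\ \frac{2\diam{R}+1}{\alpha\cdot\diam{R}/2}\ =\ \frac{4}{\alpha}+\frac{2}{\alpha\cdot\diam{R}}\ \le\ \frac{6}{\alpha}\ <\ \frac{8}{\alpha},
\]
where the last line uses $\diam{R}\ge 1$. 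There is no substantive obstacle here; the conceptual content is just the separation property of candidates, and the only thing to be mindful of is that the constants come out below $8/\alpha$ (the bookkeeping is slack enough that $6/\alpha$ works).
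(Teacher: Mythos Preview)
Your proof is correct and follows essentially the same separation argument as the paper: distinct candidates must lie at ring-distance greater than $r$, and then one counts. The paper, however, bounds $|\candidates{R}|$ by $n/r$ rather than $n/(r+1)$, which forces a case split on whether $\diam{R}<2/\alpha$ (to avoid $r=0$ and to control $\lfloor\alpha 2^{a_1}\rfloor$ from below). Your use of the exact arc-sum inequality $k(r+1)\le n$ cleanly absorbs that edge case and even yields the slightly better constant $6/\alpha$; otherwise the two arguments are the same.
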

\begin{proof}
First note that $|\candidates{R}| \geq 1$ since the largest node $v$ in $R$ also has the largest label in $\lambda(r,v)$. 
If $\diam{R}<2/\alpha$, then $|C_R|$ is at most the number of nodes in $R$, i.e., at most $2\cdot diam(R) +1$, which is at most $5/\alpha$. Hence we may assume that $\diam{R}\geq 2/\alpha$.
Next, note that there cannot be two candidate nodes $v_1,v_2$ such that the distance between them is at most $r$. Indeed, if $v_1 \in \lambda(r,v_2)$ and $v_2 \in \lambda(r,v_1)$, then the node in $\{v_1,v_2\}$ with smaller label will not be a candidate. Since the number of nodes in $R$ is $2\cdot \diam{R} < 2^{a_1+2}$, it follows that the number of candidate nodes is less than $(2^{a_1+2})/r = (2^{a_1+2})/(\lfloor \alpha 2^{a_1} \rfloor)$. 
Since $diam(R) \geq 2/\alpha$ and $a_1=\lfloor\log(\diam{R})\rfloor$, it can be shown that $(2^{a_1+2})/(\lfloor \alpha 2^{a_1} \rfloor)$ is at most $(2^{a_1+2})/( \alpha 2^{a_1-1})=8/\alpha$. This completes the proof of the lemma.
\end{proof}

We define what it means for a colouring to be \emph{legal}. 
It will be shown that a legal colouring known by all nodes is discriminatory. Let $\maxcandidates = \maxcandidatesval$.

\begin{definition}
Let ${\cal M}_{\maxcandidates}$ denote the set of all $\maxcandidates$-tuples of the form $(\ell_0,\ell_1,\ldots,\ell_{\maxcandidates-1})$ such that $\ell_0,\ell_1,\ldots,\ell_{\maxcandidates-1} \in \{1,\ldots,L\}$ and $\ell_0 > \ell_1 > \cdots > \ell_{\maxcandidates-1}$. (We identify the tuple $(\ell_0,\ell_1,\ldots,\ell_{\maxcandidates-1})$
with the set $\{\ell_0,\ell_1,\ldots,\ell_{\maxcandidates-1}\}$.)

A colouring $F$ of elements of ${\cal M}_{\maxcandidates}$ by integers is \emph{legal} if, for each colour $c$ and each integer $z \in \{1,\ldots,L\}$, either
\begin{enumerate}
\item every $(\ell_0,\ldots,\ell_{\maxcandidates-1}) \in {\cal M}_{\maxcandidates}$ that contains $z$ and is coloured $c$ has $\ell_0 = z$, or,
\item every $(\ell_0,\ldots,\ell_{\maxcandidates-1}) \in {\cal M}_{\maxcandidates}$ that contains $z$ and is coloured $c$ has $\ell_0 \neq z$.
\end{enumerate}
\end{definition}

Informally, a colouring of sets of labels is legal if,  for all sets of a given colour in which a label $z$ appears,
$z$ is either always the largest label or never the largest label. 

Assume that we have a legal colouring $F$ of ${\cal M}_{\maxcandidates}$ that uses $O(\log^{\maxcandidates} D)$ colours, and that each node knows $F$. Using $F$, we provide a complete description of our algorithm with advice of size $O(\log\log D)$ and prove that it is correct. We will then describe such a legal colouring $F$.



\begin{center}
\fbox{
\begin{minipage}{0.8\columnwidth}\small

{\bf Advice Construction}, for fixed $\alpha \in (0,1]$\\

Input: Ring $R$ with diameter at most $D$\\

1: $a_1 := \lfloor \log(\diam{R}) \rfloor$\\
2: $A_1 := \textrm{binary representation of }a_1$\\
3: $\candidates{R} := \{ v\ |\ \textrm{node $v$ has the largest label in $\lambda(\lfloor\alpha 2^{a_1}\rfloor,v)$} \}$\\
4: $(\gamma_0,\ldots,\gamma_{|\candidates{R}|-1}) := \textrm{labels of nodes in $\candidates{R}$, sorted in descending order}$\\
5: $\maxcandidates := \maxcandidatesval$\\
6: $(\ell_0,\ldots,\ell_{\maxcandidates-1}) := \textrm{a decreasing sequence of labels such that $\ell_0 = \gamma_0$}$\\ \hspace*{3.05cm}\textrm{and }$\{\gamma_1,\ldots,\gamma_{|\candidates{R}|-1}\} \subseteq \{\ell_1,\ldots,\ell_{\maxcandidates-1}\}$\\
7: $A_2 := \textrm{binary representation of }F((\ell_0,\ldots,\ell_{\maxcandidates-1}))$\\
8: Advice := $(A_1,A_2)$
\end{minipage}
}
\end{center}

\begin{center}
\fbox{
\begin{minipage}{0.8\columnwidth}\small

{\bf Algorithm} {\tt Select} at node $v$ with label $\ell$, for fixed $\alpha \in (0,1]$\\

Input: Advice $(A_1,A_2)$\\

1:\phantom{0} $a_1 := \textrm{integer value represented by $A_1$}$\\
2:\phantom{0}  Acquire knowledge $K(\lfloor\alpha 2^{a_1}\rfloor,v)$ using $\lfloor\alpha 2^{a_1}\rfloor$ communication rounds\\
3:\phantom{0}  \textbf{if } $\ell$ is not the largest label in $\lambda(\lfloor\alpha 2^{a_1}\rfloor,v)$ \textbf{ then}\\
4:\phantom{0}  \hspace*{0.7cm} Output 0 and halt.\\
5:\phantom{0}  $a_2 := \textrm{integer value represented by $A_2$}$\\
6:\phantom{0}  $\maxcandidates := \maxcandidatesval$\\
7:\phantom{0}  $Inv := \{\textrm{all $\maxcandidates$-tuples $X$ such that $F(X) = a_2$}\}$\\
8:\phantom{0}  \textbf{if } there is a tuple in $Inv$ with first entry equal to $\ell$ \textbf{ then}\\
9:\phantom{0}  \hspace*{0.7cm} Output 1 and halt.\\
10: \textbf{else}\\
11: \hspace*{0.7cm} Output 0 and halt.
\end{minipage}
}
\end{center}

In order to complete the description, it remains to construct a legal colouring $F$
that uses $O(log^{\maxcandidates} L)$ colours. Note that, since $L$ is polynomial in $D$, that the number of colours is indeed $O(log^{\maxcandidates} D)$.

Consider the following mapping $g: {\cal M}_{\maxcandidates} \longrightarrow \{0,\dots ,  \log L \}^{\maxcandidates-1}$ that maps each $(\ell_0,\ell_1,\ldots,\ell_{\maxcandidates-1}) \in {\cal M}_{\maxcandidates}$ to a $(\maxcandidates-1)$-tuple $(a_1,\ldots,a_{\maxcandidates-1})$. For each $i \in \{1,\ldots,\maxcandidates-1\}$, set $a_i$ to be the largest integer $j$ such that there exists an integer in the range $[\ell_0+1,\ldots,\ell_i]$ that is divisible by $2^j$. Next, take any bijection $f$ between $\{0,\dots ,  \log L \}^{\maxcandidates-1}$ and  the set $\{0,\dots, ((\log L)+1)^{\maxcandidates-1}-1\}$. 
Define the colouring $F$ of  ${\cal M}_{\maxcandidates}$ as the composition of  $g$ and $f$. Note that the colouring $F$ uses at most $((\log L)+1)^{\maxcandidates-1}$ colours. 

\begin{lemma}\label{isLegal}
$F$ is a legal colouring of the elements of ${\cal M}_{\maxcandidates}$.
\end{lemma}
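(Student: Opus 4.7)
The plan is to prove legality by its contrapositive. Since $F = f \circ g$ with $f$ a bijection, two tuples receive the same colour iff they have equal $g$-values. Hence it suffices to show: if $X_1, X_2 \in \mathcal{M}_\maxcandidates$ both contain some integer $z$, and $z$ is the first (largest) coordinate of $X_1$ but not of $X_2$, then $g(X_1) \neq g(X_2)$. Write $X_1 = (z, \ell^1_1, \ldots, \ell^1_{\maxcandidates-1})$ and $X_2 = (\ell^2_0, \ell^2_1, \ldots, \ell^2_{\maxcandidates-1})$, where $\ell^2_0 > z$ (since $\ell^2_0$ is the maximum of $X_2$) and $z = \ell^2_k$ for some $k \in \{1, \ldots, \maxcandidates-1\}$.

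First I would focus on the $k$-th coordinate of the two $g$-values. By the definition of $g$, $a^1_k$ is determined by which powers of $2$ divide integers in the range between $\ell^1_k$ and the top entry $z$ of $X_1$, while $a^2_k$ is determined analogously by the range between $z = \ell^2_k$ and $\ell^2_0 > z$. The key observation is that these two ranges are disjoint and adjacent: one lies at or below $z$, and the other strictly above $z$, with their union forming a single integer interval having $z$ on its internal boundary.

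The crux will be a short number-theoretic argument showing $a^1_k \neq a^2_k$. Suppose, for contradiction, that both equal some $j^*$. Then each range contains an integer divisible by $2^{j^*}$; call them $m_1$ (in the lower range, so $m_1 \leq z$) and $m_2$ (in the upper range, so $m_2 \geq z+1$). As distinct multiples of $2^{j^*}$ we have $m_2 - m_1 \geq 2^{j^*}$, so the closed interval $[m_1, m_2]$ contains at least two multiples of $2^{j^*}$. Since consecutive multiples of $2^{j^*}$ alternate in whether they are also multiples of $2^{j^*+1}$, at least one integer in $[m_1, m_2]$ is a multiple of $2^{j^*+1}$. That integer lies either at most $z$ (thus in the lower range) or at least $z+1$ (thus in the upper range), in either case contradicting the maximality expressed by $a^1_k = j^*$ or $a^2_k = j^*$.

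I expect the main obstacle to be stating the divisibility step cleanly---specifically, the fact that whenever an integer interval contains two distinct multiples of $2^{j^*}$, it must also contain a multiple of $2^{j^*+1}$. Everything else amounts to bookkeeping about which coordinate of $g$ to inspect. Once the contradiction above is in place, $g(X_1) \neq g(X_2)$, hence $F(X_1) \neq F(X_2)$, and legality of $F$ follows immediately.
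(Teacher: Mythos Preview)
Your proposal is correct and follows essentially the same route as the paper's proof. Both arguments reduce to comparing the $k$-th (respectively $i$-th) coordinate of $g$, where that index is the position of $z$ in the tuple where $z$ is \emph{not} the maximum; both then observe that the two relevant integer intervals abut at $z$, and derive a contradiction from the fact that two distinct multiples of $2^{j^*}$ in a contiguous block force a multiple of $2^{j^*+1}$ to appear in one of the halves---the paper phrases this via the parity of $q$ and $q+1$, you via the alternation of consecutive multiples, but the content is identical.
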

\begin{proof}
Consider an arbitrary colour $c \in \{1,\dots, ((\log L)+1)^{\maxcandidates-1}\}$ and any integer $z \in \{1,\ldots,L\}$. To obtain a contradiction, assume that: 
\begin{itemize}
\item there exists an element of ${\cal M}_{\maxcandidates}$, say $(x_0,\ldots,x_{\maxcandidates-1})$, that contains $z$, is coloured $c$, and has $x_0 \neq z$, and,
\item there exists an element of ${\cal M}_{\maxcandidates}$, say $(y_0,\ldots,y_{\maxcandidates-1})$, that contains $z$, is coloured $c$, and has $y_0 = z$.
\end{itemize}
Let $(a_1,\ldots,a_{\maxcandidates-1})=f^{-1}(c)$.
Since $(x_0,\ldots,x_{\maxcandidates-1})$ contains $z$ and $x_0 \neq z$, it follows that $x_i = z$ for some $i \in \{1,\ldots,\maxcandidates-1\}$. By the definition of $g$, we know that the range $[x_0+1,\ldots,z]$ contains an integer, say $z_1$, that is divisible by $2^{a_i}$, and we know that no integer in this range is divisible by $2^{a_i+1}$. Moreover, since $y_0 = z$, we know that the range $[z+1,\ldots,y_i]$ contains an integer, say $z_2$, that is divisible by $2^{a_i}$, and we know that no integer in this range is divisible by $2^{a_1+1}$. It follows that there are two distinct integers $z_1 < z_2$ in the range $[x_0+1,\ldots,y_i]$ that are divisible by $2^{a_1}$ and that no integer in the range $[x_0+1,\ldots,y_i]$ is divisible by $2^{a_1+1}$. Since $z_1+2^{a_1}$ is the smallest integer greater than $z_1$ that is divisible by $2^{a_1}$, it follows that $z_1+2^{a_1} \in [z_1,\ldots,z_2] \in [x_0+1,\ldots,y_i]$, so $z_1+2^{a_1}$ is not divisible by $2^{a_1+1}$. Note that, for some positive integer $q$, we can write $z_1 = 2^{a_1}q$ and $z_1 + 2^{a_1} = 2^{a_1}(q+1)$. Since neither $z_1$ nor $z_1 + 2^{a_1}$ is divisible by $2^{a_1+1}$, it follows that both $q$ and $q+1$ must be odd, a contradiction.
\end{proof}

\begin{theorem}\label{ub}
Consider any fixed $\alpha \in (0,1]$ and any positive integer $D$. For any ring $R$ with diameter at most $D$,
Algorithm {\tt Select} solves selection in the ring $R$ 
in time $\alpha\cdot\diam{R}$ and with advice of size $O(\log\log D)$.
\end{theorem}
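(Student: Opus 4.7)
The plan is to verify the three requirements in turn: the running time, the advice size, and the correctness of Algorithm {\tt Select}.

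For the time bound, observe that the algorithm performs communication only in line 2, where it uses $\lfloor \alpha 2^{a_1}\rfloor$ rounds. Since $a_1=\lfloor\log\diam{R}\rfloor$, we have $2^{a_1}\le\diam{R}$, so $\lfloor \alpha 2^{a_1}\rfloor \leq \alpha \cdot \diam{R}$. Stage 2 involves only local computation from the advice, so the total running time is at most $\alpha\cdot\diam{R}$.

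For the advice size, $A_1$ encodes an integer at most $\log D$, using $O(\log\log D)$ bits. The substring $A_2$ encodes one of the colours used by $F$. Since $\maxcandidates=\maxcandidatesval$ is a constant depending only on $\alpha$, and $F$ uses at most $((\log L)+1)^{\maxcandidates-1}$ colours with $L$ polynomial in $n$ (hence in $D$), the number of colours is polynomial in $\log D$. Therefore $|A_2|=O(\log\log D)$, and the total advice is $O(\log\log D)$.

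Correctness is the main content and splits into three claims. First, the largest node $v^\star$ of $R$ has the largest label in $\lambda(\lfloor\alpha 2^{a_1}\rfloor,v^\star)$, hence belongs to $\candidates{R}$ and does not halt in line 4. Every non-candidate node correctly outputs $0$ in line 4. Second, by Lemma \ref{boundcandidates}, $|\candidates{R}|\le \maxcandidates$, so in line 6 of the Advice Construction one can indeed extend $(\gamma_0,\ldots,\gamma_{|\candidates{R}|-1})$ to a decreasing $\maxcandidates$-tuple $T=(\ell_0,\ldots,\ell_{\maxcandidates-1})$ containing all candidate labels, with the largest candidate label $\gamma_0$ as first entry. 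By construction, $\gamma_0$ is exactly the label of $v^\star$. Third, consider any candidate node $v$ with label $\ell$, executing line 8. The advice $a_2=F(T)$ is fixed, and $\ell$ appears in $T\in Inv$; so the legality property (Lemma \ref{isLegal}) applied to the colour $a_2$ and the integer $\ell$ forces exactly one of the two alternatives: either every tuple in $Inv$ that contains $\ell$ has first entry $\ell$, or none does. Since $T\in Inv$ contains $\ell$ and has first entry equal to $\ell$ iff $\ell=\gamma_0$ (i.e.\ $v=v^\star$), it follows that $v$ outputs $1$ in line 9 precisely when $v=v^\star$, and $0$ in line 11 otherwise. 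Hence exactly one node — the largest — outputs $1$, as required.

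The main obstacle in the argument is the last step: convincing oneself that the legality of $F$ is \emph{exactly} the property needed so that each candidate, given only $a_2=F(T)$ (and not $T$ itself), can decide whether it is the one with the largest label. Once the legality property is unpacked as above, the correctness follows with no further computation; the heavy lifting has already been done in Lemma \ref{isLegal} and Lemma \ref{boundcandidates}.
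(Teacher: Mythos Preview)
Your proof is correct and follows essentially the same approach as the paper: verify the time bound via $2^{a_1}\le\diam{R}$, bound the advice by counting colours of $F$, and use the legality of $F$ (Lemma~\ref{isLegal}) together with the candidate bound (Lemma~\ref{boundcandidates}) to argue correctness. The only organizational difference is that the paper factors correctness through an intermediate notion (the family $\cal V$ of label sets containing all candidates and nothing larger, with separate claims that $F$ is discriminatory on $\cal V$ and that the constructed tuple lies in $\cal V$), whereas you work directly with the specific tuple $T$ from the Advice Construction; your route is slightly more streamlined but logically identical.
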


\begin{proof}
Notice that Algorithm {\tt Select} halts within  $\alpha\cdot\diam{R}$ communication rounds. Indeed, 
by line 2 of the algorithm, every node uses exactly $\lfloor \alpha 2^{a_1} \rfloor$ communication rounds, 
and $2^{a_1} \leq \diam{R}$. Next, note that since $A_1$ is the binary representation of $\lfloor \log(\diam{R}) \rfloor \leq \log D$, the length of $A_1$ is $O(\log\log D)$. Further, recall that $A_2$ is the binary representation of a colour assigned by $F$, which uses $O(\log^{\maxcandidates} D)$ colours where $\maxcandidates \in O(1)$. Thus, the length of $A_2$ is $O(\log\log D)$, and hence the size of advice is  $O(\log\log D)$.

Finally, we prove the correctness of the algorithm for an arbitrary ring $R$. First, note that the construction of the advice string can indeed be carried out. In particular, at line 6 in the Advice Construction, the tuple $(\ell_0,\ldots,\ell_{\maxcandidates-1})$ exists since, by Lemma \ref{boundcandidates}, the number of candidates (and, hence, the value of $|\candidates{R}|$) is bounded above by $\maxcandidates$. Next, recall that $\cal V$ is the family of sets  of labels which contain all labels in $C_R$ and no larger labels. The following claim shows that the colouring $F$ is discriminatory.
 



\vspace{3mm}\noindent{\bf Claim 1 } 
\textit{For any $V= (\ell_0,\ldots,\ell_{\maxcandidates-1}) \in \cal V$, let $a_2 = F(V)$. When lines 6 -- 11 of Algorithm {\tt Select} are executed, the largest node in $\candidates{R}$ outputs 1, and all other nodes in $\candidates{R}$ output 0. This proves that colouring $F$ is discriminatory.}
\vspace{3mm}

To prove the claim, consider the largest node $w \in R$. We first show that $\ell_0$ is equal to $w$'s label. Indeed, since $w$ has the largest label in $\lambda(\lfloor\alpha 2^{a_1}\rfloor,w)$, it follows that $w \in \candidates{R}$. Since $V$ does not contain labels larger than those in $C_R$, $\ell_0$ is equal to $w$'s label.

Next, note that, by line 7, $V \in Inv$. Since $\ell _0$ is equal to $w$'s label, the condition of the \textbf{if} statement at line 8 evaluates to true, so $w$ outputs 1. Next, consider any node $v \in C_R$ that is not the largest.
Since $\ell _0$ is not equal to $v$'s label and $F$ is a legal colouring, it follows that no tuple that is coloured $a_2$ has $v$'s label as its first entry. In other words, no tuple in $Inv$ has $v$'s label as its first entry, so the \textbf{if} statement at line 8 evaluates to false, and $v$ outputs 0. This completes the proof of the claim.

We have shown that, if the nodes in $\candidates{R}$ are provided as advice the value of $F(V)$ for any $V \in {\cal V}$, then lines 6 -- 11 of Algorithm {\tt Select} solve selection among the nodes in $\candidates{R}$. It remains to show that the advice substring $A_2$ created in the Advice Construction is indeed the binary representation of such an $F(V)$.

\vspace{3mm}\noindent{\bf Claim 2 } 
\textit{In Advice Construction, $(\ell_0,\ldots,\ell_{\beta-1}) \in {\cal V}$.}
\vspace{3mm}

To prove the claim, recall that, by line 4 of the Advice Construction, $(\gamma_0,\ldots,\gamma_{|\candidates{R}|-1})$ is the decreasing sequence of labels of nodes in $\candidates{R}$. Consider line 6 of the Advice Construction, and note that the largest element $\ell_0$ of the tuple $(\ell_0,\ldots,\ell_{\beta-1})$ is equal to $\gamma_0$, which proves that the labels $\ell_0,\ldots,\ell_{\beta-1}$ are no larger than those of nodes in $\candidates{R}$. Further, since $\ell_0 = \gamma_0$ and $\{\gamma_1,\ldots,\gamma_{|\candidates{R}|-1}\} \subseteq \{\ell_1,\ldots,\ell_{\maxcandidates-1}\}$, it follows that $(\ell_0,\ldots,\ell_{\beta-1})$ contains all of the labels of nodes in $C_R$. Therefore, $(\ell_0,\ldots,\ell_{\beta-1}) \in {\cal V}$, as claimed.

We can now conclude that Algorithm {\tt Select} solves selection in the entire ring. Indeed, in lines 1 -- 4, each node that learns about the existence of a node with a larger label than itself outputs 0 and halts. The remaining nodes, namely those in $\candidates{R}$ (which necessarily includes the largest node), proceed to lines 5 -- 11. From line 7 of the Advice Construction and by Claim 2, line 5 of Algorithm {\tt Select} assigns to $a_2$ the value of $F(V)$ for some $V \in {\cal V}$. Finally, Claim 1 shows that the largest node outputs 1 and all other nodes output 0.
\end{proof}

Corollary \ref{cor} and Theorem \ref{ub} imply the following tight bound on the size of advice.

\begin{corollary}
The optimal size of advice to complete selection in any ring in time linear in its diameter (and not exceeding it) is $\Theta(\log \log D)$ .
\end{corollary}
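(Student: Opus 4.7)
The plan is to obtain this tight $\Theta(\log\log D)$ bound by directly combining the two matching bounds already established in the preceding development, one lower and one upper. Since the statement is an ``if and only if'' on the order of growth of the advice size, I would organize the argument as two separate inclusions: first the $\Omega(\log\log D)$ lower bound, then the $O(\log\log D)$ upper bound.

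For the lower bound direction, I would invoke Corollary \ref{cor} directly. That corollary asserts that for any constant $\alpha \leq 1$ and any selection algorithm $\algname$ that halts within $\alpha \cdot \diam{R}$ rounds on every ring $R$, there is, for every positive integer $D$, a ring of diameter at most $D$ on which $\algname$ needs $\Omega(\log\log D)$ bits of advice. Since ``time linear in the diameter, not exceeding it'' means exactly ``time at most $\alpha\cdot\diam{R}$ for some constant $\alpha \in (0,1]$'', the corollary applies verbatim and yields the required lower bound.

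For the upper bound direction, I would invoke Theorem \ref{ub}. That theorem exhibits, for any fixed $\alpha \in (0,1]$, the algorithm \texttt{Select} which, given only $O(\log\log D)$ bits of advice, solves selection on every ring $R$ of diameter at most $D$ in at most $\alpha \cdot \diam{R}$ rounds. This matches the lower bound.

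Putting the two bounds together gives the claimed $\Theta(\log\log D)$ tight bound. There is no genuine obstacle here: the proof is simply the combination of Corollary \ref{cor} and Theorem \ref{ub}, so the only care needed is to make explicit that ``linear in the diameter'' corresponds precisely to the parameter range $\alpha \in (0,1]$ covered by both statements, which it does.
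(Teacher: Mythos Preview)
Your proposal is correct and matches the paper's own justification exactly: the corollary is stated as an immediate consequence of Corollary~\ref{cor} (lower bound) and Theorem~\ref{ub} (upper bound), with no additional argument.
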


\subsection{Selection in time $D^{\epsilon}$ where $0 \leq\epsilon < 1$}

We now turn attention to very fast selection for rings of diameter $D$. Since every such ring has size $n$ linear in $D$, and since $L$ is polynomial in $n$, selection (and even election) can be 
accomplished using $O(\log D)$ bits of advice without any communication by providing the largest label as advice.
It turns out that, even when the available time is $D^{\epsilon}$, for any constant $0 \leq\epsilon < 1$, this size of advice is necessary.  Compared to Theorem \ref{ub},
this shows that
selection in time $D^{\epsilon}$ where $0 \leq\epsilon < 1$, requires exponentially more advice
than selection in time $\Theta(D)$.

\begin{theorem}
For any constant $0\leq \epsilon < 1$,  any selection algorithm $\cal A$
for rings of diameter $D$ that works in time at most $D^{\epsilon}$ requires $\Omega(\log D)$ bits of advice. 
\end{theorem}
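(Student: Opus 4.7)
The plan is a Pigeonhole argument on a family of pairwise-conflicting rings: I will construct roughly $D^{1-\epsilon}$ rings of diameter at most $D$ such that any two of them force $\mathcal{A}$ to make contradictory selection decisions when given the same advice; since any such pair therefore needs different advice strings, this forces the advice size to be $\Omega(\log D^{1-\epsilon}) = \Omega(\log D)$. Set $t = \lfloor D^\epsilon \rfloor$ and let $m = \lfloor 2D/(2t+1) \rfloor = \Theta(D^{1-\epsilon})$, so that a cycle on $N = m(2t+1) \le 2D$ nodes has diameter at most $D$. Label the cyclic positions $0,1,\ldots,N-1$ and designate $m$ \emph{test positions} $v^*_1,\ldots,v^*_m$ by placing $v^*_i$ at position $(i-1)(2t+1)+t$; note that consecutive test positions lie at cyclic distance exactly $2t+1$, so no two test positions fall inside each other's $t$-neighborhood. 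Fix distinct large labels $\ell_1 < \ell_2 < \cdots < \ell_m$ and, for every position $p$, a distinct ``base'' label $b_p$ strictly less than $\ell_1$. For each $k \in \{1,\ldots,m\}$, define the ring $R_k$ by assigning label $\ell_i$ to $v^*_i$ for every $i \le k$ and label $b_p$ to every other position $p$; in particular, the unique maximum of $R_k$ is $\ell_k$, located at $v^*_k$.

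The main step is to show that any two rings $R_k$ and $R_{k'}$ with $k < k'$ conflict at node $v^*_k$. The key observation is that the knowledge $K(t, v^*_k)$ is identical in $R_k$ and $R_{k'}$: by the $2t+1$ spacing, $v^*_k$'s $t$-ball contains only $v^*_k$ itself (which carries label $\ell_k$ in both rings, since $k \le k'$) together with $2t$ non-test positions whose labels are the fixed base labels $b_p$. Therefore, if $R_k$ and $R_{k'}$ received the same advice, algorithm $\mathcal{A}$ would be forced to produce the same output at $v^*_k$ in both rings. But $v^*_k$ must output $1$ in $R_k$ as the unique maximum, while in $R_{k'}$ the strictly larger label $\ell_{k'}$ placed at $v^*_{k'}$ makes $v^*_k$ non-maximum and forces it to output $0$. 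This contradicts the correctness of $\mathcal{A}$, so $R_k$ and $R_{k'}$ must receive distinct advice strings.

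Since every pair of rings in $\{R_1,\ldots,R_m\}$ conflicts in this way, the algorithm must use at least $m = \Theta(D^{1-\epsilon})$ distinct advice strings, giving an advice size of $\Omega((1-\epsilon)\log D) = \Omega(\log D)$ for at least one ring in the family, as required. I expect the main technical obstacle to be guaranteeing that the test-position views are \emph{literally} identical across the rings being compared, even though $R_k$ and $R_{k'}$ use different sets of labels globally; the $2t+1$ separation combined with the choice of fixed base labels resolves this, because it ensures that the only position inside $v^*_k$'s $t$-ball whose label could possibly change from $R_k$ to $R_{k'}$ is $v^*_k$ itself, and the labeling rule keeps even that label unchanged whenever $k \le k'$.
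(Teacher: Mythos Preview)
Your proof is correct and follows the same high-level strategy as the paper: build a family of rings in which, for each pair, some node has an identical radius-$t$ view in both rings yet must output $1$ in one and $0$ in the other, then invoke Pigeonhole on advice strings. The specific constructions differ. The paper glues together an increasing number of length-$(2x+1)$ paths $P_1,\ldots,P_i$ (each centered at a local maximum) and pads with small labels to force every ring to have size exactly $2D$; the ``fooled'' node is the center of $P_a$, which sits inside both $R_a$ and $R_b$ for $a<b$. You instead fix a single cycle of length $m(2t+1)$, space $m$ test positions at distance $2t+1$, and vary only which test positions carry the large labels $\ell_i$. Your version is a bit cleaner and yields $m=\Theta(D^{1-\epsilon})$ pairwise-conflicting rings, whereas the paper's construction produces only $y=\Theta(D^{1-2\epsilon})$ rings; in particular, your argument works uniformly for every constant $\epsilon\in[0,1)$, while the paper's exponent $1-2\epsilon$ is only positive for $\epsilon<1/2$.
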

\begin{proof}
It is enough to consider sufficiently large values of $D$.
Let $U=\{1,\ldots,2D-1\}$, and let $W=\{2D,\ldots,L\}$. We start by defining a special class of rings, and then proceed to show that algorithm $\cal A$ requires $\Omega(\log D)$ bits of advice for this class. For the sake of clarity, we will use the notation $x=\lceil D^{\epsilon} \rceil$ and $y=\lceil D^{1-2\epsilon} \rceil$.

First, we define a set of paths ${\cal P} = \{P_1,\ldots,P_y\}$ as follows. Each path consists of $1+2x$ nodes with labels from $W$, with the middle node of the path having the largest label. More specifically, path $P_i$ is obtained by considering the path of nodes with labels $2Di,\ldots,2Di+2x$, respectively, and reversing the order of the last $x+1$ labels. Formally, for each $i \in \{1,\ldots,y\}$, let $P_i$ be the path of nodes whose labels form the sequence $(2Di,\ldots,2Di+x-1,2Di+2x,2Di+2x-1,\ldots,2Di+x)$.

Next, we define a set $\cal R$ of rings $\{R_1,\ldots,R_y\}$. Each ring $R_i$ will consist of the paths $P_1,\ldots,P_i \in {\cal P}$ along with enough nodes with labels from $U$ to ensure that $R_i$ has size $2D$. In order to define the ring $R_i$, we first construct a path $J_i$ by taking the paths $P_1,\ldots,P_i$, and, for each $j \in \{2,\ldots,i\}$, connecting the last node of path $P_{j-1}$ with the first node of path $P_j$. More specifically, $J_i$ is obtained by taking the union of the paths $P_1,\ldots,P_i$, and, for each $j \in \{2,\ldots,i\}$, adding an edge between the nodes with labels $2Dj - 2D +x$ and $2Dj$. 
Next, we construct a path $K_i$ whose labels form the sequence $(1,\ldots,2D-|J_i|)$. 
Since the number of nodes in $J_i$ is at most $y(1+2x)$, which, for sufficiently large $D$, is strictly less than $2D$, it follows that path $K_i$ has at least one node.
Finally, the ring $R_i$ is obtained by joining the paths $J_i$ and $K_i$. More specifically, $R_i$ is obtained by taking the union of the paths $J_i$ and $K_i$ and adding the edges $\{1,2Di+x\}$ and $\{2D-|J_i|,2D\}$. The paths and rings constructed above are illustrated in Figure \ref{pathsrings}.

\begin{figure}[!ht]
\begin{center}
\includegraphics[scale=1]{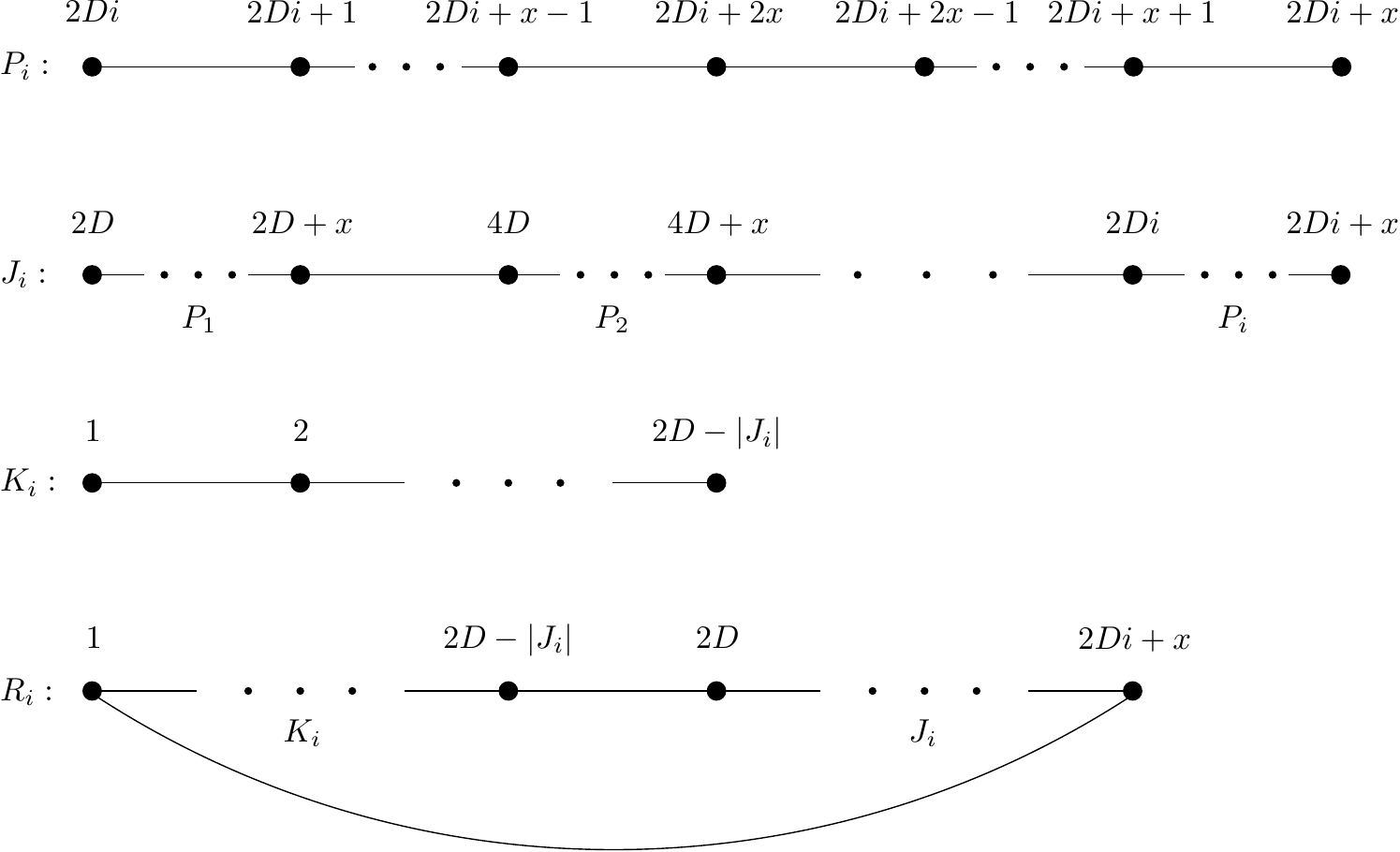}
\end{center}
\caption{Graphs used in the construction of $R_i$}
\label{pathsrings}
\end{figure}

The following claim asserts that, in any two rings in $\cal R$ that contain path $P_i$ as a subgraph, the middle node of this path in both rings acquires the same knowledge 
when executing algorithm $\cal A$.

\vspace{3mm}\noindent{\bf Claim 1 } 
\textit{Consider any $i \in \{1,\ldots,y\}$ and any $j \in \{i,\ldots,y\}$. Let $v_i$ be the node in $R_i$ with label $2Di+2x$ and let $v_j$ be the node in $R_j$ with label $2Di+2x$. Then, knowledge $K(x,v_i)$ in $R_i$ is equal to knowledge $K(x,v_j)$ in $R_j$.}
\vspace{3mm}

To prove the claim, note that, in both $R_i$ and $R_j$, the node with label $2Di+2x$ is the middle node of path $P_i$. Since the path $P_i$ has length $1+2x$, it follows that, within $x$ communication rounds, node $v_i$ does not learn about any nodes in $R_i$ outside of $P_i$. Similarly, node $v_j$ does not learn about any nodes in $R_j$ outside of $P_i$. This implies that knowledge $K(x,v_i)$ in $R_i$ is equal to knowledge $K(x,v_j)$ in $R_j$, which completes the proof of the claim.

We proceed to prove the theorem by way of contradiction. Assume that the number of bits of advice needed by algorithm ${\cal A}$ for the rings in ${\cal R}$ is less than $(1-2\epsilon)\log{D}$. It follows that the number of distinct advice strings that are provided to algorithm $\cal A$ for the rings in $\cal R$ is strictly less than $D^{1-2\epsilon} \leq y$. However, since the class $\cal R$ consists of $y$ rings (where $y \geq 2$ for sufficiently large $D$) this means that there exist two rings in $\cal R$, say $R_a, R_b$ with $a<b$, such that $\advice{R_a} = \advice{R_b}$.

First, consider the execution of $\cal A$ by the nodes of ring $R_a$. Let $v_a$ be the node in $R_a$ with label $2Da+2x$, and note that $v_a$ is the largest node in $R_a$. It follows that, in this execution, node $v_a$ outputs 1. Next, consider the execution of $\cal A$ by the nodes of ring $R_b$. Let $v_b$ be the node in $R_b$ with the label $2Da+2x$. By Claim 1, $K(x,v_a)$ in $R_a$ is equal to $K(x,v_b)$ in $R_b$. Moreover, 
algorithm $\cal A$ halts within $x$ rounds and $\advice{R_a} = \advice{R_b}$. It follows that the execution of $\cal A$ by $v_a$ in $R_a$ is identical to the execution of $\cal A$ by $v_b$ in $R_b$. Hence, $v_b$ outputs 1 in the execution of $\cal A$ in $R_b$. However, node $v_b$ is not the largest in $R_b$. Indeed, $v_b$'s label is $2Da+2x$, whereas path $P_b$ (and, hence, ring $R_b$) contains a node labeled $2Db+2x>2Da+2x$. This contradicts the correctness of~$\cal A$.
\end{proof}

\begin{corollary}
For any constant $0 \leq\epsilon < 1$,
the optimal size of advice to complete selection in any ring of diameter at most $D$ in time $D^{\epsilon}$  is $\Theta(\log D)$.
\end{corollary}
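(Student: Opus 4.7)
The plan is to exhibit a family of $y$ distinct rings of diameter exactly $D$, with $y$ polynomial in $D$ (specifically $y \in \Theta(D^{1-\epsilon})$ or any smaller polynomial with the same logarithmic order), such that any selection algorithm $\cal A$ running in time $D^{\epsilon}$ must receive a different advice string on each ring. A Pigeonhole argument will then force the advice size to be at least $\log y = \Omega(\log D)$.

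The construction I have in mind uses two disjoint pools of labels: a \emph{high} pool $W$ (from which top labels are drawn) and a \emph{low} pool $U$ (for padding). I would first build pairwise disjoint paths $P_1,\ldots,P_y$, each with $\Theta(D^{\epsilon})$ nodes labeled from $W$, so that (i) the middle node of $P_i$ carries the largest label appearing in $P_i$, and (ii) every label in $P_i$ strictly exceeds every label in $P_1,\ldots,P_{i-1}$. The $i$-th ring $R_i$ of the family is then obtained by concatenating $P_1,\ldots,P_i$ end-to-end and padding the rest with nodes labeled from $U$ so that $R_i$ has exactly $2D$ nodes, giving $\diam{R_i}=D$. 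By construction the unique largest node of $R_i$ is the middle node $v_i$ of $P_i$.

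The crucial locality step is as follows. Since each $P_i$ has at least $D^\epsilon$ nodes on each side of its middle, the $D^{\epsilon}$-ball around $v_i$ is entirely contained inside $P_i$, regardless of which ring in the family we inspect. Consequently, if $R_a$ and $R_b$ with $a<b$ received the same advice string, then the node $v_a$ would acquire identical knowledge $K(D^{\epsilon},v_a)$ in both rings and would produce the same output in the two executions of $\cal A$. Since $v_a$ is the largest node in $R_a$, it outputs $1$; but in $R_b$ the largest node is $v_b\neq v_a$, contradicting correctness of $\cal A$. Hence $y$ distinct advice strings are required, yielding advice size $\Omega(\log y) = \Omega((1-\epsilon)\log D) = \Omega(\log D)$ for any fixed $\epsilon<1$.

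The main obstacle is a little arithmetic bookkeeping rather than any conceptual difficulty. I must verify that all $y$ paths fit inside a ring of $2D$ nodes (leaving room for at least one padding node from $U$), that the chosen labels remain distinct and stay within the polynomial-sized label space $\{1,\ldots,L\}$, and that the resulting diameter is exactly $D$ so the class of rings is uniform and no implicit diameter advice leaks through. Since $y(1 + 2\lceil D^{\epsilon}\rceil) = o(D)$ under an appropriate choice of $y$, the construction fits comfortably, and all labels used lie in a range linear in $D$, hence polynomial in $n$, which justifies the use of the Pigeonhole Principle and closes the argument.
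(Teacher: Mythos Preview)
Your proposal is correct and follows essentially the same route as the paper's proof, which also builds rings $R_i$ by concatenating disjoint paths $P_1,\ldots,P_i$ (each of length $1+2\lceil D^{\epsilon}\rceil$, centered at its local maximum and labeled from a high pool) and padding with low labels to size exactly $2D$; the paper takes $y=\lceil D^{1-2\epsilon}\rceil$ so that the fit $y(1+2\lceil D^{\epsilon}\rceil)<2D$ is automatic, but your remark that any polynomial $y$ suffices is fine. For completeness of the $\Theta(\log D)$ claim, remember to state the trivial matching upper bound (provide the largest label as advice, costing $O(\log L)=O(\log D)$ bits).
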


\section{Conclusion}

We established tradeoffs between the time of choosing the largest node in a network and the amount of {\em a priori} information (advice) needed
to accomplish two variations of this task: election and selection. For the election problem, the tradeoff is complete and tight up to multiplicative constants
in the advice size.
Moreover, it holds for the class of arbitrary connected graphs.
For selection, our results are for the class of rings and a small gap remains in the picture. 
We proved that in rings with diameter $diam$ at most $D$, 
the optimal size of advice is $\Theta(\log D)$ if time is
$O(diam^{\epsilon})$ for any $\epsilon <1$, and that it is $\Theta(\log \log D)$ if time is at most $diam$.
Hence, the first open problem is to establish the optimal size of advice to perform selection for rings when the time is in the small remaining gap, for example, in time $\Theta(diam/\log diam)$.  Another problem is to extend the tradeoff obtained for selection in rings to the class
of arbitrary connected graphs. In particular, it would be interesting to investigate whether the optimal advice needed to perform fast selection in graphs of size much larger
than the diameter depends on their size (like in the case of election) or on their diameter.

As noted in the introduction, all known leader election algorithms in labeled networks choose as leader either the node with the largest label or that with 
the smallest label. It is worth noting that the situation is not always symmetric here. For example, the Time Slice algorithm for leader election \cite{Ly} (which elects the
leader using exactly $n$ messages in $n$-node rings, at the expense of possibly huge time) finds the node
with the smallest label, and it does not seem to be possible to convert it directly to finding the node with the largest label. (Of course it is possible to first
find the node with the smallest label and then to use this leader to find the node with the largest label, but this takes additional time and communication.)
In our case, however, the situation is completely symmetric with respect to the order of labels: our results hold without change, if finding the largest node is replaced by finding the smallest. It is an open question whether they also remain valid if electing the largest node is replaced by general leader election (i.e., the task
in which a single {\em arbitrary} node becomes the leader, and all other nodes become non-leaders and also learn the identity of the leader) and if selecting the largest
node is replaced by general  leader selection  (i.e., the task
in which a single {\em arbitrary} node becomes the leader, and all other nodes become non-leaders). 
Our lower bound for election in time at most the diameter and our algorithms for selection use the ordering in an essential way, and it is not clear if smaller advice would be sufficient to elect or select some arbitrary node in a given time.

\pagebreak

\bibliographystyle{plain}

\begin{thebibliography}{12}
\bibitem{AKM01}
S.~Abiteboul, H.~Kaplan, T.~Milo, Compact labeling schemes for ancestor
queries, Proc. 12th Annual ACM-SIAM Symposium on Discrete
Algorithms (SODA 2001), 547--556.

\bibitem{An}
D.~Angluin, Local and global properties in networks of processors. 
Proc. 12th Annual ACM Symposium on Theory of Computing (STOC 1980), 82--93.


\bibitem{AtSn}
H. Attiya and M. Snir,
Better Computing on the Anonymous Ring,
{\em Journal of Algorithms} 12, (1991), 204-238.



\bibitem{ASW}
H. Attiya, M. Snir, and M. Warmuth,
Computing on an Anonymous Ring,
{\em Journal of the ACM} 35, (1988), 845-875.

\bibitem{BSVCGS}
P. Boldi, S. Shammah, S. Vigna, B. Codenotti, P. Gemmell, and J. Simon,
Symmetry Breaking in Anonymous Networks: Characterizations. 
{\em Proc. 4th Israel Symposium on Theory of Computing and Systems,
(ISTCS 1996}), 16-26.




\bibitem{BV}
P. Boldi and S. Vigna,
Computing Anonymously with Arbitrary Knowledge,
{\em Proc. 18th ACM Symp. on Principles of Distributed Computing (PODC 1999)}, 181-188.

\bibitem{B}
J.E. Burns, A Formal Model for Message Passing Systems,
{\em Tech. Report TR-91}, Computer Science Department,
Indiana University, Bloomington, September 1980.



\bibitem{CFP}
S. Caminiti, I. Finocchi, R. Petreschi,
Engineering tree labeling schemes: a case study on least common ancestor,
Proc. 16th Annual European Symposium on Algorithms (ESA 2008), 234--245.

\bibitem{C}
J. Chalopin,
Local Computations on Closed Unlabelled Edges: The Election Problem and the Naming Problem 
{\em Proc. 31st Conference on Current Trends in Theory and Practice of Computer Science (SOFSEM 2005)}, 82-91.

\bibitem{CM}
J. Chalopin and Y. M\'etivier,
Election and Local Computations on Edges.
{\em Proc. Foundations of Software Science and Computation Structures (FoSSaCS 2004)}, 90-104.



\bibitem{DP}
D. Dereniowski, A. Pelc, Drawing maps with advice,  Journal of Parallel and Distributed Computing 72 (2012), 132--143. 


\bibitem{EFKR}
Y. Emek, P. Fraigniaud, A. Korman, A. Rosen, Online computation with advice, Theoretical Computer Science 412 (2011), 2642--2656.




\bibitem{FGIP}
P. Fraigniaud, C. Gavoille, D. Ilcinkas, A. Pelc, 
Distributed computing with advice: Information sensitivity of graph coloring, 
Distributed Computing 21 (2009), 395--403.

\bibitem{FIP1}
P. Fraigniaud, D. Ilcinkas, A. Pelc, 
Communication algorithms with advice, Journal of  Computer and System Sciences 76 (2010), 222--232.

\bibitem{FIP2}
P. Fraigniaud, D. Ilcinkas, A. Pelc, 
Tree exploration with advice, Information and Computation 206 (2008), 1276--1287.

\bibitem{FKL}
P. Fraigniaud, A. Korman, E. Lebhar,
Local MST computation with short advice,
Theory of Computing Systems 47 (2010), 920--933.

\bibitem{FL}
G.N. Fredrickson and N.A. Lynch,
Electing a Leader in a Synchronous Ring,
{\em Journal of the ACM} 34 (1987), 98-115.

\bibitem{FP2}
E. Fusco, A. Pelc, How Much Memory is Needed for Leader Election, {\em Distributed Computing} 24 (2011), 65-78. 








\bibitem{FP}
E. Fusco, A. Pelc, Trade-offs between the size of advice and broadcasting time in trees, Algorithmica 60 (2011), 719--734. 


\bibitem{FPR}
E. Fusco, A. Pelc, R. Petreschi, Use knowledge to learn faster: Topology recognition with advice, Proc. 27th International Symposium on Distributed Computing (DISC 2013), 31-45.

\bibitem{GPPR02}
C.~Gavoille, D.~Peleg, S.~P\'{e}rennes, R.~Raz.
Distance labeling in graphs, 
Journal of Algorithms 53 (2004), 85-112.

   \bibitem{HKMMJ}
     M.A. Haddar, A.H. Kacem, Y. M\'{e}tivier, M. Mosbah, and M. Jmaiel,  Electing a Leader in the Local Computation Model using Mobile Agents.
{\em Proc.  6th ACS/IEEE International Conference on Computer Systems and Applications (AICCSA 2008}), 473-480.

\bibitem{HS}
D.S. Hirschberg, and J.B. Sinclair,
Decentralized Extrema-Finding in Circular Configurations of Processes,
{\em Communications of the ACM} 23 (1980), 627-628.





\bibitem{IKP}
D. Ilcinkas, D. Kowalski, A. Pelc, 
Fast radio broadcasting with advice, 
 Theoretical Computer Science, 411 (2012),  1544--1557.
 
 \bibitem{JKZ}
T. Jurdzinski, M. Kutylowski, and J. Zatopianski, 
Efficient Algorithms for Leader Election in~Radio Networks.
 {\em Proc., 21st ACM Symp. on Principles of Distributed Computing
(PODC 2002)}, 51-57.




\bibitem{KKKP02}
M.~Katz, N.~Katz, A.~Korman, D.~Peleg, Labeling schemes for flow and
connectivity, 
SIAM Journal of  Computing 34 (2004), 23--40.


\bibitem{KKP05}
A. Korman, S. Kutten, D. Peleg, Proof labeling schemes,
Distributed Computing 22 (2010), 215--233.  

\bibitem{KP}
D. Kowalski, and A. Pelc, Leader Election in Ad Hoc Radio Networks: A Keen Ear Helps, 
{\em Proc. 36th International Colloquium on Automata, Languages and Programming (ICALP 2009)}, 
{\em LNCS} 5556, 521-533. 



\bibitem{LL}
G. Le Lann,
Distributed Systems - Towards a Formal Approach,
{\em Proc. IFIP Congress}, 1977, 155--160, North Holland.




\bibitem{Ly}
N.L. Lynch, Distributed algorithms, Morgan Kaufmann Publ. Inc.,
San Francisco, USA, 1996.

\bibitem{NO}
K. Nakano and S. Olariu, Uniform Leader Election Protocols for Radio Networks,
{\em IEEE Transactions on Parallel and Distributed Systems} 13
(2002), 516-526.

\bibitem{SN}
N. Nisse, D. Soguet, Graph searching with advice,
Theoretical Computer Science 410 (2009), 1307--1318.



 \bibitem{Pe}D. Peleg,
  Distributed Computing, A Locality-Sensitive Approach,
  SIAM Monographs on Discrete Mathematics and Applications, Philadelphia 2000.
  
  \bibitem{P}
G.L. Peterson, An $O(n \log n)$ Unidirectional Distributed Algorithm
for the Circular Extrema Problem,
{\em ACM Transactions on Programming Languages and Systems} 4 (1982), 758-762.








\bibitem{TZ05}
M.~Thorup, U.~Zwick, Approximate distance oracles,
Journal of the ACM, 52 (2005), 1--24.

\bibitem{Wil}
D.E. Willard, 
Log-logarithmic Selection Resolution Protocols in a Multiple Access Channel,
{\em SIAM J. on Computing} 15 (1986), 468-477. 


\bibitem{YK2}
M. Yamashita and T. Kameda,
Electing a Leader when Procesor Identity Numbers are not Distinct,
{\em Proc. 3rd Workshop on Distributed Algorithms (WDAG 1989)},
{\em LNCS} 392, 303-314.

\bibitem{YK3}
M. Yamashita and T. Kameda,
Computing on Anonymous Networks: Part I - Characterizing the Solvable Cases,
{\em IEEE Trans. Parallel and Distributed Systems} 7 (1996), 69-89. 








\end{thebibliography}


\end{document}